\documentclass[journal]{IEEEtran}
\usepackage{amsmath,amsfonts}
\usepackage{algorithmic}
\usepackage{array}
\usepackage{textcomp}
\usepackage{stfloats}
\usepackage{url}
\usepackage{verbatim}
\usepackage{cite}
\usepackage{sharina}
\usepackage{setspace}
\usepackage{amsthm}
\usepackage{amssymb,amsmath,latexsym}
\usepackage{graphicx,subfigure}
\usepackage{color,cite,times}
\usepackage{epstopdf}
\usepackage{multirow}
\usepackage{array}
\usepackage{booktabs}
\usepackage{hhline} 
\newtheorem{assumption}{Assumption}

\newtheorem{theorem}{\textbf{Theorem}}
\newtheorem{lemma}{\textbf{Lemma}}

\allowdisplaybreaks[4]
\usepackage[ruled,vlined,linesnumbered]{algorithm2e}
\usepackage{amsmath,amssymb}
\usepackage{mathtools} 

\begin{document}
	
	\title{
Event-Triggered Gossip for Distributed Learning}
\author{Zhiyuan Zhai, Xiaojun Yuan, \IEEEmembership{Fellow, IEEE}, 
Wei Ni, \IEEEmembership{Fellow, IEEE},\\
        Xin Wang, \IEEEmembership{Fellow, IEEE}, Rui Zhang, \IEEEmembership{Fellow, IEEE}, and  Geoffrey Ye Li, \IEEEmembership{Fellow, IEEE}}

	\maketitle
	
\begin{abstract}
While distributed learning offers a new learning paradigm for distributed network with no central coordination, it is constrained by communication bottleneck between nodes.
	We develop a new event-triggered gossip framework for distributed learning to reduce inter-node communication overhead. The framework introduces an adaptive communication control mechanism that enables each node to autonomously decide in a fully decentralized fashion when to exchange model information with its neighbors based on local model deviations. We analyze the  ergodic convergence of the proposed framework under noconvex objectives and interpret the convergence guarantees under different triggering conditions. Simulation results show that the proposed framework achieves substantially lower communication overhead  than the state-of-the-art distributed learning methods, reducing cumulative point-to-point transmissions by \textbf{71.61\%} with only a marginal performance loss, compared with the conventional full-communication baseline.
\end{abstract}

\begin{IEEEkeywords}
Distributed learning, event-triggered gossip, communication overhead.
\end{IEEEkeywords}

\section{Introduction}
This section provides an overview of distributed learning systems and the communication bottlenecks that limit their scalability in decentralized settings. We first discuss existing communication-reduction strategies and identify their limitations in adaptively controlling information exchange. Then, we present the motivation, contributions and structure of the event-triggered gossip framework proposed to address the limitations in this paper.
\subsection{Motivation and Challenges}
Machine learning has been widely used in  intelligent data analytics and decision-making in domains, such as autonomous driving,  Internet of Things (IoT), and smart healthcare.  
Centralized training usually aggregates raw data at a single location, and is sometimes infeasible due to communication bandwidth limitations, privacy constraints, and single-point-of-failure risks~\cite{tang2020commEfficientDL,Zhou2023InexactADMM,zhou2024largeScaleDL,10818523,zhang2025dark}.
Distributed learning, such as federated learning (FL), enables multiple devices, each holding its local dataset, to collaboratively train a shared model without relying on a central coordinator.  
Each node communicates only with its direct neighbors and exchanges model states or gradients to achieve consensus over time~\cite{lin2021quasiGlobalMom, xuanyu2023overviewDL,Qin2021FLWireless,Ye2022DecentralizedFL,Sha2025SparseDFL}.  
This architecture improves scalability, robustness, and data privacy, making it well suited for bandwidth- and energy-limited systems.

Despite these advantages, distributed learning often suffers from a severe \emph{communication bottleneck}.  
To maintain consensus among nodes, standard algorithms (e.g., gossip-based methods \cite{koloskova2019decentralized,song2022communication,zhai2025spectral}) require all devices to exchange information at every iteration, leading to massive point-to-point transmission and rapid exhaustion of bandwidth resources~\cite{hong2019dlion,Renggli2019sparcML}.  
On the other hand, excessive suppression of communication may lead to model drift across nodes and degrades learning accuracy~\cite{cao2021overviewCommEff, nabli2023acid}.  
This raises a central question:
	{How can we design communication-efficient distributed learning algorithms that preserve high model accuracy but significantly reduce inter-node transmissions?}

\subsection{Related Work}

 A variety of research efforts have  explored methods to reduce this communication burden. We  devide them into three major categories and discuss each of them.

\paragraph*{Model and Gradient Compression}  
One representative line of research reduces the size of exchanged updates through quantization or sparsification. 
For instance, ternary gradient (TernGrad)~\cite{wen2017terngrad} and  quantized stochastic gradient descent (QSGD)~\cite{alistarh2017qsgd} quantize gradients into low-bit representations while maintaining convergence guarantees. 
Later works extend these ideas by transmitting only the most significant components, as in Top-$k$ gradient sparsification~\cite{aji2017sparse} or memory-based error compensation methods~\cite{stich2018sparsified}. 
System-level designs, such as deep gradient compression (DGC)~\cite{lin2018dgc}, further combine momentum correction and local gradient clipping to achieve up to $600\times$ communication reduction without accuracy loss. 
These compression-based methods focus on minimizing the payload of each communication round.

\paragraph*{Periodic or Infrequent Communication}  
Another major direction reduces communication frequency by allowing devices to perform multiple local updates between synchronization rounds. 
The local stochastic gradient descent (SGD) framework~\cite{stich2019local} provides rigorous convergence analysis under delayed averaging, 
while\cite{haddadpour2019local} extends it with adaptive synchronization intervals. 
In FL, federated averaging (FedAvg)~\cite{mcmahan2017communication} adopts similar periodic aggregation and partial participation principles to trade off communication and computation. More recently,  hybrid FL algorithm (FedGiA)~\cite{Zhou2023FedGiA}  combines periodic communication with a hybrid gradient descent and inexact alternating direction method of
multipliers (ADMM) update to reduce communication rounds under mild convergence conditions.
A unified analysis of such approaches is provided  in cooperative SGD~\cite{wang2021cooperative}. 
These approaches reduce communication  by decreasing synchronization frequency, but may still suffer from model divergence under heterogeneous data distributions.

\paragraph*{Over-the-Air Aggregation}  
At the wireless physical layer, an emerging body of work exploits the superposition property of multiple-access channels to aggregate model updates “over the air.” 
Over-the-Air (OTA) computation~\cite{yang2018aircomp,zhu2019baa,amiri2020analog,sery2021otafl,10506083} enables simultaneous analog transmission of local gradients, such that the received waveform inherently represents their sum. 
Recent surveys~\cite{zhou2024ota,XIAO20251710} summarize advances in channel inversion, power control, and error compensation that make OTA a promising paradigm for low-latency, large-scale distributed learning. 
These techniques focus on improving the efficiency of the {physical-layer aggregation}.

Although these methods have substantially alleviated the communication burden in distributed learning, 
they  focus primarily on optimizing the {amount}, {frequency}, or {physical efficiency} of information exchange. 
Little attention has been paid to designing a communication control mechanism at the \emph{algorithmic level} that adaptively determines \emph{when} devices should exchange their local models. 
Such a mechanism represents a complementary perspective of optimization  to the above communication-reduction approaches, providing an alternative means to reduce inter-node communication while maintaining  learning accuracy.

\subsection{Contributions}

This paper develops a new event-triggered gossip framework for distributed learning, which significantly reduces redundant inter-node transmissions while maintaining model accuracy. 
The core idea is to introduce a communication control mechanism at the algorithmic level, allowing each device to autonomously decide {when} to exchange its model with neighbors based on local dynamics. 
We  prove that the proposed algorithm has the potential to achieve the same convergence rate as centralized SGD. 
Extensive experiments verify that our method can drastically reduce communication volume without sacrificing learning performance.

The key contributions of this paper are summarized as follows:
\begin{itemize}
	\item \textit{Event-triggered communication mechanism:}  
	We design a novel gossip-based distributed learning algorithm where each node  triggers communication according to its local model deviation. 
	This mechanism enables asynchronous and data-dependent information exchange without any global coordination, effectively suppressing redundant transmissions while preserving consensus.
	
	\item \textit{Unified convergence analysis:}  
	For the first time, we establish the  ergodic convergence bound for event-triggered distributed learning under nonconvex objectives. 
	Our analysis reveals that the convergence rate  depends jointly on the stepsize, the network spectral gap, and the triggering thresholds,  providing explicit insights into the trade-off between learning accuracy and communication cost.
	
\item \textit{Behavior under different triggering strategies:}  
We analyze the convergence of the proposed algorithm under three representative triggering policies, including constant zero threshold, fixed nonzero threshold, and gradually decaying threshold.  
It is revealed that under a decaying threshold policy, our algorithm achieves the same convergence rate as centralized SGD, i.e., $\mathcal{O}\!\big(T^{-1/2}\big)$.

	\item \textit{Comprehensive experimental validation:}  
	Extensive simulations on the MNIST \cite{lecun1998mnist} and Fashion-MNIST \cite{xiao2017fashion} datasets demonstrate that the proposed framework achieves comparable learning accuracy to the full communication scheme while substantially reducing the communication volume. 
	The algorithm reduces the cumulative point-to-point transmissions by \textbf{71.6\%} on Fashion-MNIST and \textbf{69.4\%} on MNIST, with less than 1\% accuracy degradation. 
\end{itemize}

The rest of this paper is organized as follows. 
Section~II presents the system model and event-triggered gossip formulation. 
Section~III provides the convergence analysis under different triggering strategies. 
Section~IV reports simulation results and performance comparisons. 
Finally, Section~V concludes the paper.

\textit{Notation:}
 $\mathbb{R}$  denotes the set of real numbers.  
We represent scalars by regular letters, vectors by bold lowercase letters, and matrices by bold uppercase letters.  
$(\cdot)^{\mathrm{T}}$ denotes  transpose.   
We denote the $i$-th largest eigenvalue of a matrix by $\lambda_i(\cdot)$,  
and the $(i,j)$-th entry of a matrix $\mathbf{X}$ by $\mathbf{X}_{ij}$.  
We denote the Euclidean norm for vectors/matrices by $\|\cdot\|$ or $\|\cdot\|_2$, and the Frobenius norm for matrices by $\|\cdot\|_F$.    
$\mathbb{E}[\cdot]$ denotes the expectation operator.  
We denote an all-one vector by $\mathbf{1}$ and the identity matrix by $\mathbf{I}$.  $\mathcal{O}(\cdot)$ denotes an upper bound up to a constant factor; 
$\Theta(\cdot)$ denotes a tight bound up to constant factors;
 $\mathbf e_i\in\mathbb{R}^n$ denotes the $i$-th canonical basis vector, whose $i$-th entry is $1$ and the rest are $0$.

\section{System Model}
\label{sec:system-model}


\begin{figure}[t]
	\centering
	\includegraphics[width=0.9\linewidth]{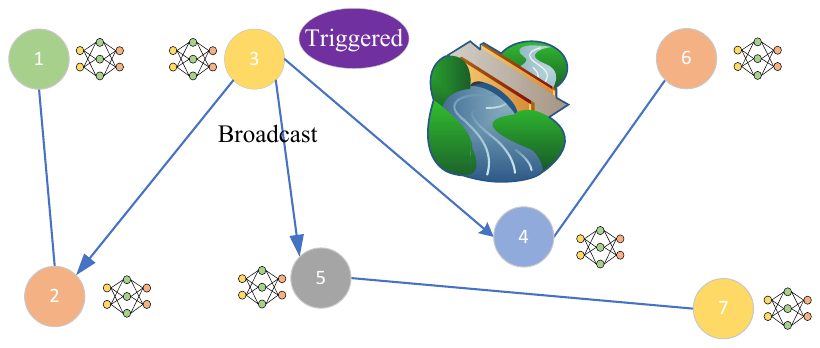}
	\caption{An example of event-triggered communication  model.}
	\label{sys_model}
\end{figure}



In this section, we present the  system model of distributed learning.
We start with the motivation for adopting a distributed training architecture,
followed by the corresponding network topology and optimization objective.

\subsection{Background}
In many modern applications, such as edge intelligence and IoT systems,
training data are  generated and stored across multiple devices
(e.g., mobile phones, sensors, or edge servers).
A straightforward solution is \emph{centralized learning},
where all devices upload their raw data to a central server,
which then trains a global model using the aggregated dataset.
This approach often suffers from several practical limitations:
(i) transmitting all raw data leads to  high communication overhead,
(ii) sharing raw data may violate privacy constraints,
and (iii) the central server becomes a single point of failure.

To overcome these issues, \emph{distributed learning} has emerged as an attractive alternative.
Instead of sending raw data to a central node,
each device keeps its local dataset and performs local training,
and only exchanges model parameters or gradients with neighboring devices.
In this way, the system collaboratively learns a shared model
with significantly reduced communication cost and improved privacy protection.

\subsection{Network and Learning Objective}

We model the considered system as a decentralized network consisting of $n$ computing nodes
$\mathcal{V}=\{1,\ldots,n\}$ connected by a static, undirected, and connected graph
$\mathcal{G}=(\mathcal{V},\mathcal{E})$, as illustrated in Fig.~\ref{sys_model}.
Here, $\mathcal{E}$ denotes the set of communication links.
If $(i,j)\in\mathcal{E}$, nodes $i$ and $j$ are said to be \emph{neighbors},
and $\mathcal{N}(i)$ denotes the neighbor set of node $i$.

Communication is restricted to \emph{one-hop neighbor exchanges only}:
each node can directly communicate only with its immediate neighbors in the graph.
Multi-hop relaying, centralized coordination, or global broadcasts are not considered.
This setting captures practical peer-to-peer or decentralized edge networks,
where only local connectivity is available.

Each node $i$ holds a local dataset drawn from a data distribution $\mathcal{D}_i$.
Based on its own data, node $i$  minimizes the local expected loss function,
\begin{equation}
	f_i(\mathbf{x})
	\triangleq
	\mathbb{E}_{\xi\sim\mathcal{D}_i}\!\big[\ell(\mathbf{x};\xi)\big],
	\qquad
	\mathbf{x}\in\mathbb{R}^{d},
\end{equation}
where $\mathbf{x}$ denotes the model parameter vector,
$\xi$ represents a random data sample at node $i$,
and $\ell(\mathbf{x};\xi)$ is the sample-wise loss
(e.g., the cross-entropy loss for classification or the squared loss for regression).

The goal of distributed learning is to collaboratively optimize a shared global model
by minimizing the average loss over all nodes in a peer-to-peer fashion, as given by
\begin{equation}
	\min_{\mathbf{x}\in\mathbb{R}^{d}}
	\; f(\mathbf{x})
	\triangleq
	\frac{1}{n}\sum_{i=1}^{n} f_i(\mathbf{x}).
\end{equation}
Since the data remains locally stored and only model information is exchanged,
the above problem must be solved through decentralized cooperation among neighboring nodes.

\section{Proposed Event-Triggered Framework}
In this section, we introduce the proposed event-triggered communication
framework and the corresponding local update rules.
We first explain the motivation for reducing communication in distributed learning,
and then present the event-triggered gossip mechanism, together with a compact
matrix-form description of the overall system dynamics.

\subsection{Motivation for Event-Triggered Communication}

In existing distributed learning algorithms,
communication typically follows a \emph{full-communication} strategy.
Specifically, at every iteration, each node exchanges its current model
(or gradient) with all of its neighbors and performs a consensus update.
Although this approach ensures fast information mixing across the network,
it requires communication at every round and over every link.

When the number of devices or the training horizon is large,
such frequent transmissions lead to substantial communication overhead,
which may quickly exhaust bandwidth and energy resources in practical edge networks.
In many scenarios, however, consecutive local models change only slightly,
making repeated transmissions largely redundant.

This observation motivates the following question:
\emph{Can nodes communicate only when necessary,
while  maintaining comparable learning accuracy and consensus performance?}
To answer this research question, we propose an \emph{event-triggered communication} mechanism.
Instead of communicating at every iteration,
each node autonomously decides whether to transmit its model
according to its local model evolution.
Communication is triggered only when the local model deviates
sufficiently from its last transmitted state,
thereby suppressing redundant exchanges and significantly reducing the overall communication cost.
\subsection{Event-Triggered Gossip}

We consider an iterative distributed learning process.
At  iteration $t$, every node maintains a local model,
computes a stochastic gradient based on its local data,
and exchanges model information with its neighbors to promote consensus.
The communication among nodes is governed by an event-triggered mechanism,
which adaptively determines whether a node should transmit its local model
based on model evolution.

Each iteration consists of the following three steps.

\textbf{Step 1: Communication (broadcast and receive).}

Let $\mathbf{x}_{i,t}$ denote the local model of node $i$ in the $t$-th iteration.
Each node $i$ maintains two  states in addition to $\mathbf{x}_{i,t}$:
\begin{itemize}
	\item \textbf{Broadcast snapshot} $\hat{\mathbf{x}}_{i,t}\in\mathbb{R}^{d}$: The last model of node $i$ broadcast to neighbors (initialized by $\hat{\mathbf{x}}_{i,0}=\mathbf{x}_{i,0}$).
	\item \textbf{Receive caches} $\tilde{\mathbf{x}}_{j\rightarrow i,t}\in\mathbb{R}^{d}$: At receiver $i$, the most recently received model from neighbor $j$ (initialized by $\tilde{\mathbf{x}}_{j\rightarrow i,-1}=\mathbf{x}_{j,0}$). We enforce $\tilde{\mathbf{x}}_{i\rightarrow i,t}\equiv \mathbf{x}_{i,t}$.
\end{itemize}
At the beginning of round $t$, each node $i$ computes the drift
\begin{equation}
	\mathbf{e}_i^{t}\triangleq \mathbf{x}_{i,t}-\hat{\mathbf{x}}_{i,t}.
\end{equation}
Given a nonnegative threshold $\tau_{t}\ge 0$, node $i$ \emph{triggers} communication if $\|\mathbf{e}_i^{t}\|\ge \tau_{t}$, immediately broadcasting $\mathbf{x}_{i,t}$ to all $j\in\mathcal{N}(i)$ and setting $\hat{\mathbf{x}}_{i,t}\!\leftarrow \mathbf{x}_{i,t}$.

After receiving the models, each node $i$ forms a convex combination of its own
and its neighbors’ models using the mixing weights
$\mathbf{W}\in\mathbb{R}^{n\times n}$:
\begin{equation}
	\mathbf{x}_{i,t,\mathrm{mix}} = \sum_{j=1}^{n} \mathbf{W}_{ji}\,\tilde{\mathbf{x}}_{j\rightarrow i,t},
	\label{eq:gossip}
\end{equation}
where $\tilde{\mathbf{x}}_{j\rightarrow i,t}$ denotes the cached model received
from node $j$, which is defined as
\begin{equation}
	\tilde{\mathbf{x}}_{j\rightarrow i,t} =
	\begin{cases}
		\mathbf{x}_{j,t}, & \text{if } \|\mathbf{e}_j^{t}\|\ge \tau_{t},\\
		\tilde{\mathbf{x}}_{j\rightarrow i,t-1}, & \text{otherwise (reuse cache).}
	\end{cases}
	\label{eq:cache}
\end{equation}
Moreover, $\mathbf{W}$ is a mixing matrix supported on the network graph
$\mathcal{G}$, satisfying
\begin{align}
    \mathbf{W}_{ji} > 0, &\quad \text{only if } j \in \mathcal{N}(i)\cup\{i\};\\
    \mathbf{W}_{ji} = 0, &\quad \text{otherwise}.
\end{align}

\textbf{Step 2: Local gradient computation.}

Each node $i$ computes a stochastic gradient on its local model by sampling $\xi_{i,t}$\footnote{
Since the stochastic gradient is evaluated at the current local model $\mathbf{x}_{i,t}$,
it does not depend on the newly received neighbor information.
Therefore, the gradient computation does not need to wait for the completion of the communication step,
and Steps 1 and 2 can be executed in parallel in practice.}:
\begin{equation}
	\mathbf{g}_{i,t} \triangleq g_i(\mathbf{x}_{i,t};\xi_{i,t}) 
	= \nabla_{\mathbf{x}}\ell(\mathbf{x}_{i,t};\xi_{i,t}).
\end{equation}

\textbf{Step 3: Model update.}

After that, each node $i$ updates its model using the mixing model $\mathbf{x}_{i,t,\mathrm{mix}}$ and the computed stochastic gradient:
\begin{equation}
	\mathbf{x}_{i,t+1} = \mathbf{x}_{i,t,\mathrm{mix}} - \eta\,\mathbf{g}_{i,t},
	\label{eq:update}
\end{equation}
where $\eta>0$ is a constant stepsize.

We  summarize the above operations in Algorithm~\ref{alg:etg-local}.

\begin{algorithm}[t]
	\caption{Event-Triggered Gossip: Local Procedure at Node $i$}
	\label{alg:etg-local}
	\DontPrintSemicolon
	\KwIn{Initial model $\mathbf{x}_{i,0}=\mathbf{x}_{0}, \mathbf{x}_{i,0}\in\!\mathbb{R}^d$; constant stepsize $\eta>0$; neighbor set $\mathcal{N}(i)$; mixing weights $\mathbf{W}$; thresholds $\{\tau_t\}_{t=0}^{T-1}$; rounds $T$.}
	\KwData{Snapshot $\hat{\mathbf{x}}_{i,0} \leftarrow \mathbf{x}_{i,0}$; caches $\tilde{\mathbf{x}}_{j\rightarrow i,-1} \leftarrow \mathbf{x}_{j,0}$ for all $j\in\mathcal{N}(i)$; set $\tilde{\mathbf{x}}_{i\rightarrow i,-1}\leftarrow \mathbf{x}_{i,0}$.}
	
	\For{$t=0,1,\ldots,T-1$}{
		
		\textbf{Trigger test \& broadcast on current model.}\;
		\If{$\|\mathbf{x}_{i,t}-\hat{\mathbf{x}}_{i,t}\|\ge \tau_t$}{
			broadcast the vector $\mathbf{x}_{i,t}$ to all $j\in\mathcal{N}(i)$;\quad
			$\hat{\mathbf{x}}_{i,t} \leftarrow \mathbf{x}_{i,t}$\;
		}
		
		\textbf{Receive window and cache update.}\;
		\For{$j\in\mathcal{N}(i)$}{
			\eIf{a message from $j$ is received in round $t$}{
				$\tilde{\mathbf{x}}_{j\rightarrow i,t} \leftarrow$ received $\mathbf{x}_{j,t}$
			}{
				$\tilde{\mathbf{x}}_{j\rightarrow i,t} \leftarrow \tilde{\mathbf{x}}_{j\rightarrow i,t-1}$
			}
		}
		$\tilde{\mathbf{x}}_{i\rightarrow i}^{\,t} \leftarrow \mathbf{x}_{i,t}$
		
		\textbf{Local stochastic gradient on current model.}\;
		Sample $\xi_{i,t}\sim\mathcal{D}_i$ and compute $\mathbf{g}_{i,t} = g_i(\mathbf{x}_{i,t};\xi_{i,t})$\;
		
		\textbf{One-step gossip mixing with available models.}\;
		$\mathbf{x}_{i,t,\mathrm{mix}} \leftarrow \displaystyle\sum_{j=1}^{n} \mathbf{W}_{ji}\,\tilde{\mathbf{x}}_{j\rightarrow i,t}$
		
		\textbf{Local update.}\;
		$\mathbf{x}_{i,t+1} \leftarrow \mathbf{x}_{i,t,\mathrm{mix}} - \eta\,\mathbf{g}_{i,t}$
	}
	\KwOut{Final average model  $\bar{\mathbf{x}}_T$.}
\end{algorithm}

\subsection{Obsolescence Error and One-Round Update}
In a global point of view, we  introduce the matrix form of the system update. 
At round $t$, we stack all local models column-wise as
\begin{equation}
	\mathbf{X}_{t} \;\triangleq\; 
	\big[\;\mathbf{x}_{1,t}\ \mathbf{x}_{2,t}\ \cdots\ \mathbf{x}_{n,t}\;\big]
	\;\in\;\mathbb{R}^{d\times n},
\end{equation}
where the $i$-th column is the model of node $i$ at round $t$. 
Similarly, we collect the stochastic gradients as
\begin{equation}
	\mathbf{G}_{t} \;\triangleq\;
	\big[\;\mathbf{g}_{1,t}\ \mathbf{g}_{2,t}\ \cdots\ \mathbf{g}_{n,t}\;\big]
	\;\in\;\mathbb{R}^{d\times n}.
\end{equation}

When mixing on $\mathbf{X}_{t}$, the cached model $\tilde{\mathbf{x}}_{j\rightarrow i,t}$ at node $i$ may differ from the true current model $\mathbf{x}_{j,t}$ due to event-triggered communication. 
We define the per-edge \emph{obsolescence error} as
\begin{equation}
	\mathbf{v}_{j\rightarrow i,t}\;\triangleq\;
	\tilde{\mathbf{x}}_{j\rightarrow i,t}-\mathbf{x}_{j,t},
	\qquad
	\mathbf{v}_{i\rightarrow i,t}=\mathbf{0}.
	\label{eq:edge-error}
\end{equation}
If node $j$ triggers communication at round $t$, then $\mathbf{v}_{j\rightarrow i,t}=\mathbf{0}, \forall i\in \mathcal{N}(j)\cup\{i\} $; otherwise,
\begin{equation}
	\|\mathbf{v}_{j\rightarrow i,t}\|\;\le\;\tau_{t},
	\qquad \forall\, i,j,
	\label{eq:edge-error-bound}
\end{equation}
since the cache stores a past broadcast of $j$ and because of the triggering rule.
Aggregating the obsolescence error in \eqref{eq:edge-error} across neighbors, the total error injected at node $i$ is
\begin{equation}
\mathbf{v}_{i,t}\;\triangleq\;\sum_{j=1}^{n}\mathbf{v}_{j\rightarrow i,t}\,\mathbf{W}_{ji}.\label{v_it}
\end{equation}
We collect $\mathbf{v}_{i,t}$ into an error matrix,
\begin{equation}
	\mathbf{V}_{t}\;\triangleq\;
	\big[\;\mathbf{v}_{1,t}\ \mathbf{v}_{2,t}\ \cdots\ \mathbf{v}_{n,t}\;\big]
	\;\in\;\mathbb{R}^{d\times n}.
\end{equation}
Now, we  obtain the matrix form of system update in \eqref{eq:update} as
\begin{equation}
	\mathbf{X}_{t+1}
	= \mathbf{X}_{t}\mathbf{W}
	- \eta\,\mathbf{G}_{t}
	+ \mathbf{V}_{t}.
	\label{eq:matrix-update}
\end{equation}
Equation~\eqref{eq:matrix-update} compactly summarizes the overall dynamics: 
the first term, $\mathbf{X}_{t}\mathbf{W}$, corresponds to gossip mixing; 
the second term, $-\eta\mathbf{G}_{t}$, is the local stochastic gradient update;
and the last term, $\mathbf{V}_{t}$, captures the perturbation induced by event-triggered communication\footnote{ When $\mathbf{V}_{t}=\mathbf{0}$ (i.e., all nodes communicate at every round),
\eqref{eq:matrix-update} reduces to the standard full-communication decentralized SGD update.}.

\section{Convergence Analysis}
In this section, we analyze the convergence  of the proposed event-triggered gossip algorithm. We first state the assumptions and characterize the dynamics of the network-wide average iterate. We then establish a unified ergodic convergence bound under general triggering thresholds and discuss its implications under different thresholding strategies.
\subsection{Assumptions for Convergence Analysis}

\begin{assumption}[Mixing Weights]\label{ass:mixing}
	The mixing matrix, $\mathbf{W}\in\mathbb{R}^{n\times n}$, is nonnegative, 
	{doubly stochastic and symmetric}, i.e.,
	\[
	\mathbf{W}\mathbf{1}=\mathbf{1}, \qquad \mathbf{1}^\top \mathbf{W}=\mathbf{1}^\top, \qquad \mathbf{W}=\mathbf{W}^\top.
	\]
	Let $\mathbf{J}\triangleq \frac{1}{n}\mathbf{1}\mathbf{1}^\top$. The spectral contraction factor is
	\[
	\delta \;\triangleq\; \|\mathbf{W}-\mathbf{J}\|_{2}^2 \;<\; 1.
	\]
\end{assumption}

\begin{assumption}[Smoothness]\label{ass:smooth}
	Each local objective function $f_i:\mathbb{R}^d \rightarrow \mathbb{R}$ is differentiable, 
	and its gradient $\nabla f_i(\cdot)$ is Lipschitz continuous with parameter $L>0$. 
	That is, for all $\mathbf{x},\mathbf{y}\in\mathbb{R}^d$ and for all $i\in[n]$,
	\[
	\|\nabla f_i(\mathbf{x})-\nabla f_i(\mathbf{y})\| \;\le\; L\,\|\mathbf{x}-\mathbf{y}\|.
	\]
\end{assumption}

\begin{assumption}[Bounded Variance]\label{ass:variance}
	For each node $i\in[n]$, the stochastic gradient 
	$\mathbf{g}_{i,t}=g_i(\mathbf{x}_{i,t};\xi_{i,t})$ 
	is an unbiased estimator of $\nabla f_i(\mathbf{x}_{i,t})$, 
	and its variance is uniformly bounded. Moreover, the divergence of local gradients from the global gradient is also bounded. Formally, for all $\mathbf{x}\in\mathbb{R}^d$,
	\[
	\mathbb{E}\big\|\mathbf{g}_{i,t}-\nabla f_i(\mathbf{x}_{i,t})\big\|^2 \;\le\; \alpha^2,
	\qquad \forall\, i\in[n],
	\]
	\[
	\mathbb{E}\big\|\nabla f_i(\mathbf{x})-\nabla f(\mathbf{x})\big\|^2 \;\le\; \beta^2,
	\qquad \forall\, i\in[n].
	\]
	The first expectation is taken with respect to the data sampling, 
	$\xi_{i,t}\sim\mathcal{D}_i$. The second expectation  is taken over a uniformly random choice  of node index $i \sim \mathcal{U}([n])$, which corresponds to measuring the average gradient 
discrepancy across nodes.
\end{assumption}

The above three assumptions are common in the literature on decentralized stochastic optimization, e.g., \cite{koloskova2019decentralized,wang2018cooperative,li2021decentralized,11077738,Zhou2025PreconditionedADMM}. 
In Assumption~\ref{ass:mixing},  symmetricity and double stochasticity  ensure that local updates are convex combinations of neighbor models and that the network-wide average is preserved. 
The spectral contraction factor $\delta=\|\mathbf{W}-\mathbf{J}\|_2<1$ further guarantees that the consensus error decays geometrically, 
which is equivalent to requiring that all eigenvalues of $\mathbf{W}$  have magnitude strictly less than $1$, except the largest one that is $1$. 
Assumption~\ref{ass:smooth} imposes $L$-smoothness on the local objectives, meaning that their gradients are Lipschitz continuous. 
This condition controls how fast the gradients can change with respect to the model parameters and is a standard requirement for analyzing the stability of gradient-based methods. 
Assumption~\ref{ass:variance} bounds both the stochastic gradient variance and the divergence across devices. 
$\alpha^2$ quantifies the upper limit of the noise introduced by local data sampling;  $\beta^2$ controls the level of heterogeneity among devices. 
Together, these bounds ensure that the stochasticity and non-independent and identically distributed (non-i.i.d.) effects remain well behaved and facilitate analysis.

\subsection{Global Average Dynamics}
Define the column-wise averages
\begin{equation}
	\bar{\mathbf{x}}_{t}\;\triangleq\; \mathbf{X}_{t}\frac{\mathbf{1}}{n}\in\mathbb{R}^{d},
	\qquad
	\bar{\mathbf{g}}_{t}\;\triangleq\; \mathbf{G}_{t}\frac{\mathbf{1}}{n}\in\mathbb{R}^{d},
\end{equation}
and the average perturbation
\begin{equation}
	\bar{\mathbf{e}}_{t}\;\triangleq\; \mathbf{V}_{t}\frac{\mathbf{1}}{n}\in\mathbb{R}^{d}.
\end{equation}
Multiplying the matrix update \eqref{eq:matrix-update}  by $\frac{\mathbf{1}}{n}$ and using the column–stochasticity $\,\mathbf{W}\mathbf{1}=\mathbf{1}\,$ in Assumption \ref{ass:mixing}, we obtain
\begin{equation}
		\bar{\mathbf{x}}_{\,t+1}
		= \bar{\mathbf{x}}_{\,t}\;-\;\eta\,\bar{\mathbf{g}}_{t}\;+\;\bar{\mathbf{e}}_{t}.
	\label{eq:avg-evolution}
\end{equation}
Hence, gossip mixing (by mixing matrix $\mathbf{W}$) cancels out in the average (it preserves $\bar{\mathbf{x}}_{t}$), and \emph{event-triggering} enters the average dynamics only through the additive perturbation $\bar{\mathbf{e}}_{t}$.

Recall $\mathbf{v}_{i,t}=\sum_{j=1}^{n}\mathbf{W}_{ji}\,\mathbf{v}_{j\rightarrow i,t}$ in \eqref{v_it}  and $\|\mathbf{v}_{j\rightarrow i,t}\|\le \tau_t$ in \eqref{eq:edge-error-bound}. Then, we have
\begin{align}
	\|\bar{\mathbf{e}}_{t}\|
	&= \Big\|\frac{1}{n}\sum_{i=1}^{n}\mathbf{v}_{i,t}\Big\|
	\;\le\; \frac{1}{n}\sum_{i=1}^{n}\|\mathbf{v}_{i,t}\| \notag\\
	&\le\; \frac{1}{n}\sum_{i=1}^{n}\sum_{j=1}^{n}\mathbf{W}_{ji}\,\|\mathbf{v}_{j\rightarrow i,t}\|
	\;\le\; \tau_t.
	\label{eq:avg-error-bound}
\end{align}
Consequently, the average iterate \eqref{eq:avg-evolution} behaves like a centralized SGD step with an \emph{additive perturbation} with magnitude  controlled by the trigger threshold $\tau_t$ (e.g., if $\tau_t\equiv 0$, then $\bar{\mathbf{e}}_t=\mathbf{0}$).

\subsection{Ergodic Convergence Result}
With the above assumptions and analyses, we now establish the convergence guarantee for the proposed event-triggered gossip algorithm in the following theorem with the proof provided in Appendix~\ref{appb}.

\begin{theorem}\label{thm1}
	Under Assumptions~1--3 and any stepsize $\eta>0$ such that
	\[
	\Gamma \triangleq 1-\frac{27n\eta^2L^2}{(1-\sqrt{\delta})^2}>0,\quad
	\Delta \;\triangleq\; \frac{1}{2}-\frac{27n\eta^2L^2}{(1-\sqrt{\delta})^2\Gamma} \;>\; 0,
	\]
	the iterates of Algorithm~1 satisfy the ergodic convergence bound
	\begin{align}
		&\frac{1}{T}\sum_{t=0}^{T-1}\mathbb{E}\big\|\nabla f(\bar{\mathbf{x}}_{t})\big\|^{2}\nonumber\\
		&\le \frac{f(\bar{\mathbf{x}}^{0})-f^{\ast}}{\eta\Delta\,T}
		+ \frac{\eta L^{2}}{\Gamma\Delta}\!\left(\frac{3n^{2}\alpha^{2}}{1-\delta}
		+ \frac{27\eta^{2}n\beta^{2}}{(1-\sqrt{\delta})^{2}}\right)
		+ \frac{\eta\alpha^{2}}{n\Delta} \nonumber\\
		&
		+ \frac{9\eta L^{2}}{(1-\sqrt{\delta})^{2}\Gamma\Delta}\cdot
		\frac{1}{T}\sum_{t=0}^{T-1} n\tau_{t}^{2}
		+ \frac{1}{\eta\Delta}\cdot \frac{1}{T}\sum_{t=0}^{T-1}\tau_{t}^{2}.
		\label{eq:ergodic-bound}
	\end{align}
\end{theorem}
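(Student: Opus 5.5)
The plan is a standard two-part argument in the style of decentralized SGD analyses: a descent inequality for the averaged iterate \eqref{eq:avg-evolution}, and a consensus-error recursion for $\mathbf{X}_t(\mathbf{I}-\mathbf{J})$ driven by the matrix update \eqref{eq:matrix-update}. We then combine the two, average over $t$, and rearrange.

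\textbf{Step 1 (descent on the average).} Apply $L$-smoothness of $f$ to $\bar{\mathbf{x}}_{t+1}=\bar{\mathbf{x}}_t-\eta\bar{\mathbf{g}}_t+\bar{\mathbf{e}}_t$:
\[
f(\bar{\mathbf{x}}_{t+1})\le f(\bar{\mathbf{x}}_t)-\eta\langle\nabla f(\bar{\mathbf{x}}_t),\bar{\mathbf{g}}_t\rangle+\langle\nabla f(\bar{\mathbf{x}}_t),\bar{\mathbf{e}}_t\rangle+L\eta^2\|\bar{\mathbf{g}}_t\|^2+L\|\bar{\mathbf{e}}_t\|^2 .
\]
\begin{itemize}
\item Take the conditional expectation. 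Unbiasedness replaces $\bar{\mathbf{g}}_t$ by $\frac1n\sum_i\nabla f_i(\mathbf{x}_{i,t})$. The variance term then contributes $\eta^2\alpha^2/n$ (after dividing by $\eta$ this becomes the $\eta\alpha^2/(n\Delta)$ term).
\item Use $-\langle a,b\rangle=\frac12(\|a-b\|^2-\|a\|^2-\|b\|^2)$. Together with smoothness this gives the mismatch bound $\|\nabla f(\bar{\mathbf{x}}_t)-\frac1n\sum_i\nabla f_i(\mathbf{x}_{i,t})\|^2\le \frac{L^2}{n}\|\mathbf{X}_t-\bar{\mathbf{x}}_t\mathbf{1}^\top\|_F^2$.
\item Handle the perturbation cross term by Young's inequality, $\langle\nabla f,\bar{\mathbf{e}}_t\rangle\le \frac{\eta}{4}\|\nabla f\|^2+\frac1\eta\|\bar{\mathbf{e}}_t\|^2$, and then use $\|\bar{\mathbf{e}}_t\|\le\tau_t$ from \eqref{eq:avg-error-bound}. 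This yields the $\frac{1}{\eta\Delta}\frac1T\sum\tau_t^2$ term.
\end{itemize}
The net effect is a coefficient of order $\eta/2$ (minus absorbed pieces) on $\mathbb{E}\|\nabla f(\bar{\mathbf{x}}_t)\|^2$, plus a term $\eta L^2/n\cdot\Xi_t$, where $\Xi_t\triangleq\mathbb{E}\|\mathbf{X}_t(\mathbf{I}-\mathbf{J})\|_F^2$.

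\textbf{Step 2 (consensus error).} Since $\mathbf{W}\mathbf{J}=\mathbf{J}\mathbf{W}=\mathbf{J}$, we have
\[
\mathbf{X}_{t+1}(\mathbf{I}-\mathbf{J})=\mathbf{X}_t(\mathbf{I}-\mathbf{J})(\mathbf{W}-\mathbf{J})-\eta\mathbf{G}_t(\mathbf{I}-\mathbf{J})+\mathbf{V}_t(\mathbf{I}-\mathbf{J}).
\]
Starting from $\mathbf{X}_0(\mathbf{I}-\mathbf{J})=0$, unroll this to
\[
\mathbf{X}_t(\mathbf{I}-\mathbf{J})=\sum_{s<t}\big(-\eta\mathbf{G}_s+\mathbf{V}_s\big)(\mathbf{I}-\mathbf{J})(\mathbf{W}-\mathbf{J})^{t-1-s},
\]
whose powers contract with factor $\sqrt\delta^{\,t-1-s}$.
\begin{itemize}
\item Split $\mathbf{G}_s$ into three parts: the noise $\mathbf{G}_s-\nabla F(\mathbf{X}_s)$, the drift $\nabla F(\mathbf{X}_s)-\nabla F(\bar{\mathbf{x}}_s\mathbf{1}^\top)$, and the heterogeneity/global-gradient part $\nabla F(\bar{\mathbf{x}}_s\mathbf{1}^\top)$.
\item The noise part is a martingale sum. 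Cross terms vanish, so it is bounded by $\eta^2 n\alpha^2\sum\delta^{t-1-s}\le \eta^2n\alpha^2/(1-\delta)$; this is the source of the $1/(1-\delta)$ factor.
\item For the remaining parts use $\|\sum_s a_s\|^2\le(\sum_s\rho^{t-1-s})\sum_s\rho^{t-1-s}\|a_s\|^2$ with $\rho=\sqrt\delta$, together with a three-way split $\|a+b+c\|^2\le3(\cdot)$, applied twice. This produces the factors $9$ and $27$ and $(1-\sqrt\delta)^{-2}$. The drift part gives $L^2\Xi_s$. The heterogeneity part gives $n\beta^2$ plus $n\|\nabla f(\bar{\mathbf{x}}_s)\|^2$. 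The trigger part gives $\|\mathbf{V}_s\|_F^2\le n\tau_s^2$, using Jensen on the convex weights $\mathbf{W}_{ji}$.
\end{itemize}
Summing over $t$ and swapping the sums (geometric tail $\le 1/(1-\sqrt\delta)$) gives
\[
\sum_t\Xi_t\le \tfrac{27n\eta^2L^2}{(1-\sqrt\delta)^2}\sum_t\Xi_t+\cdots .
\]
Moving the first term to the left side gives the factor $\Gamma$.

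\textbf{Step 3 (combine).} Substitute the bound on $\frac1T\sum_t\Xi_t$ into the averaged Step 1 inequality. The $\|\nabla f(\bar{\mathbf{x}}_s)\|^2$ contribution from the consensus bound, of size $\frac{27n\eta^2L^2}{(1-\sqrt\delta)^2\Gamma}$ relative to the leading coefficient, is absorbed into the left side, leaving the coefficient $\eta\Delta$. Then telescope $f(\bar{\mathbf{x}}_0)-\mathbb{E}f(\bar{\mathbf{x}}_T)\le f(\bar{\mathbf{x}}^0)-f^\ast$ and divide by $\eta\Delta T$ to obtain \eqref{eq:ergodic-bound}.

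\textbf{Main obstacle.} The main difficulty is the bookkeeping in Step 2. The exact constants ($3n^2$, $27$, $9$) must come out consistently with the chosen splits, and the absorption must produce exactly $\Gamma$ and then $\Delta$. I would first fix the three-way split (noise handled separately via orthogonality, then a 3-term Young split), and then tune the Young constants in Step 1 so that the leading coefficient is $\tfrac12$.
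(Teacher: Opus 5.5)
\textbf{The gap: the final division does not give the stated bound.} The problem is the last sentence of your Step~3. Dividing by $\eta\Delta T$ does not produce the stated inequality, because the powers of $\eta$ do not match. Your Step~1 bullets hide this: you divide the $\alpha^2$-term by $\eta$ but not the $\tau^2$-term.

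Carried out as you describe (with $\eta\le 1/(2L)$, which follows from $\Delta>0$ since that forces $27n\eta^2L^2<(1-\sqrt{\delta})^2/3$), Step~1 gives per round
$-\tfrac{\eta}{4}\mathbb{E}\|\nabla f(\bar{\mathbf{x}}_t)\|^2+\tfrac{\eta L^2}{2n}\Xi_t+\tfrac{L\eta^2\alpha^2}{n}+(\tfrac{1}{\eta}+L)\tau_t^2$.
Note that the variance carries a factor $L$, coming from $L\eta^2\|\bar{\mathbf{g}}_t\|^2$. Since $\mathbf{V}_s$ enters the consensus recursion without a factor $\eta$, Step~2 gives
$\tfrac{1}{T}\sum_t\Xi_t\lesssim\tfrac{n}{\Gamma}\big(\tfrac{\eta^2\alpha^2}{1-\delta}+\tfrac{\eta^2\beta^2+\overline{\tau^2}}{(1-\sqrt{\delta})^2}\big)$, with $\overline{\tau^2}=\tfrac{1}{T}\sum_t\tau_t^2$. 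Hence, for an absolute constant $C$,
\[
\begin{aligned}
\frac{1}{T}\sum_{t=0}^{T-1}\mathbb{E}\|\nabla f(\bar{\mathbf{x}}_t)\|^2
&\le C\Big(\frac{f(\bar{\mathbf{x}}_0)-f^{\ast}}{\eta T}+\frac{L\eta\alpha^2}{n}+\frac{\eta^2L^2}{\Gamma}\Big(\frac{\alpha^2}{1-\delta}+\frac{\beta^2}{(1-\sqrt{\delta})^2}\Big)\\
&\qquad+\Big(\frac{L^2}{(1-\sqrt{\delta})^2\Gamma}+\frac{1}{\eta^2}\Big)\overline{\tau^2}\Big).
\end{aligned}
\]
Compare this with the statement term by term:
\begin{itemize}
\item the heterogeneity term is of order $\eta^2$ instead of $\eta^3$;
\item the two trigger terms are of order $\eta^{0}$ and $\eta^{-2}$ instead of $\eta$ and $\eta^{-1}$;
\item the variance term carries a factor $L$ that is absent from $\eta\alpha^2/(n\Delta)$.
\end{itemize}
For small $\eta$ the first two are not dominated by their counterparts. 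No tuning of Young constants repairs this. Nor can any Young split make the leading coefficient exactly $\eta/2$. The paper gets $\eta/2$ only by folding $\bar{\mathbf{e}}_t/\eta$ into the gradient estimate before the polarization identity and dropping $(\tfrac{L\eta^2}{2}-\tfrac{\eta}{2})\|\bar{\mathbf{g}}_t-\bar{\mathbf{e}}_t/\eta\|^2$.

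\textbf{The statement as written cannot be reached.} Replace $\ell$ by $c\,\ell$ and $\eta$ by $\eta/c$. Then $\eta\mathbf{g}_{i,t}$ is unchanged, hence so are all iterates, trigger events, $\mathbf{V}_t$, $\Gamma$ and $\Delta$, while $L,\alpha,\beta$ scale by $c$. The left side and the first term on the right scale like $c^2$. The $\beta$-term, both $\tau$-terms and $\eta\alpha^2/(n\Delta)$ scale only like $c$.

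Now take identical deterministic quadratics ($\alpha=\beta=0$, $\mathbf{x}_0$ not a minimizer) and a constant $\tau$ larger than every drift $\|\mathbf{x}_{i,t}-\mathbf{x}_0\|$. No node ever triggers, and $\bar{\mathbf{x}}_t$ converges to a non-stationary point. Yet letting $c\to\infty$, the statement would force $\tfrac{1}{T}\sum_t\|\nabla f(\bar{\mathbf{x}}_t)\|^2\le\tfrac{f(\bar{\mathbf{x}}_0)-f^{\ast}}{\eta\Delta T}$ in the original scaling, which fails for large $T$.

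\textbf{The paper's proof has the same mismatch.} Its final inequality, divided by $\eta\Delta T$, gives
$\tfrac{\alpha^2}{n\Delta}$,
$\tfrac{L^2}{\Gamma\Delta}\big(\tfrac{3\eta^2n\alpha^2}{1-\delta}+\tfrac{27\eta^2n\beta^2}{(1-\sqrt{\delta})^2}\big)$,
$\tfrac{9L^2}{(1-\sqrt{\delta})^2\Gamma\Delta}\cdot\tfrac{1}{T}\sum_t n\tau_t^2$ and
$\tfrac{1}{\eta^2\Delta}\overline{\tau^2}$.
(Its $3n^2\alpha^2$ should read $3\eta^2 n\alpha^2$, as in the display before it.) The theorem instead divides all but the first term by $\Delta T$ only.

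\textbf{What your steps actually prove.} Your Steps~1--2 deliver a bound of the corrected shape displayed above, with sharper constants than the paper's:
\begin{itemize}
\item in your projected recursion $\nabla f(\bar{\mathbf{x}}_s)\mathbf{1}^\top(\mathbf{I}-\mathbf{J})=\mathbf{0}$, so there is no $\|\nabla f\|^2$ feedback and no $\Delta$-absorption is needed;
\item the contraction condition avoids the factor $n$ that the paper's per-node bound $\|\mathbf{U}\mathbf{a}\|\le\|\mathbf{U}\|_F\|\mathbf{a}\|$ introduces;
\item the variance enters as $L\eta\alpha^2/n$ rather than the paper's non-vanishing $\alpha^2/(n\Delta)$.
\end{itemize}
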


Theorem~\ref{thm1} provides a unified ergodic convergence guarantee for the proposed event-triggered gossip algorithm under any feasible thresholds $\{\tau_t\}$; see \eqref{eq:ergodic-bound}. 
The bound makes explicit how the convergence  depends jointly on the stepsize $\eta$, the spectral contraction factor $(1-\sqrt{\delta})$, the stochastic gradient variance $\alpha^2$, the data heterogeneity $\beta^2$, and the triggering sequence $\{\tau_t\}$ through the stability constants $\Gamma$ and $\Delta$. 
In particular, the event-triggered mechanism injects perturbations into the average-iterate dynamics of $\bar{\mathbf{x}}_t$ via the terms controlled by $\{\tau_t\}$, as shown in \eqref{eq:avg-error-bound}. A smaller $\tau_t$ yields more frequent communication and tighter tracking of the centralized trajectory, whereas a larger $\tau_t$ reduces communication at the expense of a non-vanishing steady-state bias in \eqref{eq:ergodic-bound}.

\subsection{Triggering Threshold Discussion}
To further interpret \eqref{eq:ergodic-bound} and clarify the accuracy–communication trade-off, we specialize Theorem~\ref{thm1} to three representative regimes of the threshold sequence $\{\tau_t\}$. 
 \textbf{Case~A} corresponds to \emph{full communication} ($\tau_t \equiv 0$), which recovers the standard nonconvex rate; 
\textbf{Case~B} studies a \emph{constant threshold} ($\tau_t \equiv \tau>0$), which leads to a bounded steady-state bias determined by $\tau$; 
\textbf{Case~C} considers \emph{decaying thresholds} $\{\tau_t\}$, where the triggering condition becomes gradually looser over time.

\textbf{Case A (full communication, $\tau_t \equiv 0$).}
When every round involves communication ($\tau_t \equiv 0$), the ergodic bound in \eqref{eq:ergodic-bound} reduces to three dominant terms:
\[
\frac{1}{T}\sum_{t=0}^{T-1}\mathbb{E}\|\nabla f(\bar{\mathbf{x}}_t)\|^2
\;\le\;
\frac{K_1}{\eta T} \;+\; K_2\eta \;+\; K_3\eta^{3},
\]
where the constants are explicitly given by
\begin{align}
	K_1 &= \frac{f(\bar{\mathbf{x}}_0)-f^{*}}{\Delta}; \nonumber\\[3pt]
	K_2 &= \frac{L^2}{\Gamma\Delta}\!\left(\frac{3n^{2}\alpha^{2}}{1-\delta}\right) 
	+ \frac{\alpha^{2}}{n\Delta}; \nonumber\\[3pt]
	K_3 &= \frac{27L^{2}n\beta^{2}}{(1-\sqrt{\delta})^{2}\Gamma\Delta}. \nonumber
\end{align}

To ensure the validity of this bound, the stepsize, $\eta$, must satisfy three conditions simultaneously. 
The first condition is that the smoothness argument in the descent step requires $\eta \le 1/L$. 
The second condition is that the stability conditions, $\Gamma>0$ and $\Delta>0$, need to be guaranteed by the sufficient inequality $\tfrac{27n\eta^{2}L^{2}}{(1-\sqrt{\delta})^{2}} < \tfrac{1}{3}$, which yields
\[
\eta < \frac{1-\sqrt{\delta}}{9\sqrt{n}\,L}.
\]
The third condition is that, since $\eta\le 1/L$ and $\eta<\frac{1-\sqrt{\delta}}{9\sqrt{n}\,L}$,  $\eta\leq\eta_{\max}:=\min\{\tfrac{1}{L},\,\tfrac{1-\sqrt{\delta}}{9\sqrt{n}\,L}\}$. Using $\eta\le\eta_{\max}$, we squeeze the cubic term by $K_3\eta^3 \le K_3\eta_{\max}^2\,\eta$, so the right-hand side (RHS) is upper-bounded by $\frac{K_1}{\eta T}+\widetilde K_2\,\eta$ with $\widetilde K_2:=K_2+K_3\eta_{\max}^2$. Minimizing this upper bound gives $\eta^{*}=\sqrt{K_1/(\widetilde K_2 T)}=\Theta(T^{-1/2})$. Combining these conditions, a convenient stepsize choice is
\[
\eta = \Theta\Big(\min\Big\{\tfrac{1}{L},\ \tfrac{1-\sqrt{\delta}}{\sqrt{n}\,L},\ \tfrac{1}{\sqrt{T}}\Big\}\Big).
\]
With this setting, the algorithm achieves the standard convergence rate, as given by
\[
\frac{1}{T}\sum_{t=0}^{T-1}\mathbb{E}\|\nabla f(\bar{\mathbf{x}}_t)\|^2
= \mathcal{O}\!\left(\tfrac{1}{\sqrt{T}}\right).
\]
To this end, with full communication, the proposed algorithm achieves the same $\mathcal{O}(1/\sqrt{T})$ rate as centralized SGD \cite{ghadimi2013stochastic,bottou2018optimization,johnson2013accelerating} in the nonconvex setting, while the hidden constants in the convergence bound are affected by  the stochastic gradient variance ($\alpha>0$),  data heterogeneity ($\beta>0$), and  network spectral gap $(1-\sqrt{\delta})$.

\textbf{Case B (constant threshold, $\tau_t \equiv \tau>0$).}
Plugging $\tau_t\equiv\tau$ into \eqref{eq:ergodic-bound} yields, for any feasible stepsize
$\eta\in(0,\eta_{\max}]$ with $\eta_{\max}:=\min\{\tfrac{1}{L},\,\tfrac{1-\sqrt{\delta}}{9\sqrt{n}\,L}\}$,
\[
\frac{1}{T}\sum_{t=0}^{T-1}\mathbb{E}\|\nabla f(\bar{\mathbf{x}}_t)\|^2
\;\le\;
{\frac{K_1}{\eta T}}
\;+\;
{K_2\,\eta}
\;+\;
{\frac{K_3}{\eta}}
\;+\;
{K_4\,\eta^{3}},
\]
where
\begin{align}
	K_1 &= \frac{f(\bar{\mathbf{x}}_0)-f^{*}}{\Delta}; \nonumber\\[3pt]
	K_2 &= \frac{\alpha^{2}}{n\Delta}
	+ \frac{L^{2}}{\Gamma\Delta}\!\left(\frac{3n^{2}\alpha^{2}}{1-\delta}\right)
	+ \frac{9L^{2}n\tau^{2}}{(1-\sqrt{\delta})^{2}\Gamma\Delta}; \nonumber\\[3pt]
	K_3 &= \frac{\tau^{2}}{\Delta}; \qquad
	K_4 = \frac{27L^{2}n\beta^{2}}{(1-\sqrt{\delta})^{2}\Gamma\Delta}. \nonumber
\end{align}
Since $\eta\le\eta_{\max}$, we have
$K_4\,\eta^{3}\le K_4\,\eta_{\max}^{2}\,\eta$. Let $\widetilde K_2:=K_2+K_4\,\eta_{\max}^{2}$. Then,
\begin{align}
    \frac{1}{T}\sum_{t=0}^{T-1}\mathbb{E}\|\nabla f(\bar{\mathbf{x}}_t)\|^2
\;\le\; \frac{K_1}{\eta T}\;+\;\widetilde K_2\,\eta\;+\;\frac{K_3}{\eta}.\label{K3_n}
\end{align}
Minimizing the RHS of \eqref{K3_n} over $(0,\eta_{\max}]$ gives the implementable choice:
\[
{\eta}\;=\;\min\!\Big\{\sqrt{\frac{K_1/T+K_3}{\widetilde K_2}},\ \eta_{\max}\Big\},
\]
and, therefore,
\begin{align}
	\frac{1}{T}\sum_{t=0}^{T-1}\mathbb{E}\|\nabla f(\bar{\mathbf{x}}_t)\|^2
	&\;\le\; 2\sqrt{\widetilde K_2\!\left(\tfrac{K_1}{T}+K_3\right)} \nonumber\\[3pt]
	&\;=\; \mathcal{O}\!\Big(\tfrac{1}{\sqrt{T}}\Big)
	\;+\; 2\sqrt{\widetilde K_2\,K_3}. \nonumber
\end{align}
Hence, with a constant (non-decaying) threshold, the bound consists of a transient
$\mathcal{O}(T^{-1/2})$ term plus a non-vanishing steady-state bias $2\sqrt{\widetilde K_2\,K_3}$; the
ergodic gradient norm does not converge to $0$ as $T\to\infty$. The constants depend on the stochastic
gradient variance ($\alpha$),  data heterogeneity ($\beta$), and  spectral gap $(1-\sqrt{\delta})$
through $\Gamma$ and $\Delta$.

\textbf{Case C (decaying threshold, $\{\tau_t\}$).}
Let $\overline{\tau^2}:=\tfrac{1}{T}\sum_{t=0}^{T-1}\tau_t^2$. Plugging this into \eqref{eq:ergodic-bound} gives: 
\[
\frac{1}{T}\sum_{t=0}^{T-1}\mathbb{E}\|\nabla f(\bar{\mathbf{x}}_t)\|^2
\;\le\;
{\frac{K_1}{\eta T}}
\;+\;
{\big(K_2 + K_4\,\overline{\tau^2}\big)\,\eta}
\;+\;
{\frac{b_T}{\eta}}
\;+\;
{K_3\,\eta^{3}},
\]
for any feasible stepsize $\eta\in(0,\eta_{\max}]$ with $\eta_{\max}:=\min\{\tfrac{1}{L},\,\tfrac{1-\sqrt{\delta}}{9\sqrt{n}\,L}\}$, where
\begin{align}
	K_1 &= \frac{f(\bar{\mathbf{x}}^0)-f^{*}}{\Delta}; \nonumber\\
	K_2 &= \frac{\alpha^{2}}{n\Delta}
	+\frac{3n^{2}L^{2}\alpha^{2}}{(1-\delta)\Gamma\Delta}; \nonumber\\
	K_3 &= \frac{27L^{2}n\beta^{2}}
	{(1-\sqrt{\delta})^{2}\Gamma\Delta}; \nonumber\\
	K_4 &= \frac{9L^{2}n}
	{(1-\sqrt{\delta})^{2}\Gamma\Delta}; \qquad
	b_T = \frac{\overline{\tau^2}}{\Delta}. \nonumber
\end{align}
Since $\eta\le\eta_{\max}$, we have $K_3\eta^{3}\le K_3\eta_{\max}^{2}\,\eta$. Define 
\[
\widetilde a_T\;:=\;K_2\;+\;K_3\eta_{\max}^{2}\;+\;K_4\,\overline{\tau^2}.
\]
Then,
\begin{align}
    \frac{1}{T}\sum_{t=0}^{T-1}\mathbb{E}\|\nabla f(\bar{\mathbf{x}}_t)\|^2
\;\le\; \frac{K_1}{\eta T}\;+\;\widetilde a_T\,\eta\;+\;\frac{b_T}{\eta}.\label{bt_n}
\end{align}
Minimizing the RHS of \eqref{bt_n} over $(0,\eta_{\max}]$ yields the implementable choice:
\[
{\eta}\;=\;\min\!\Big\{\sqrt{\frac{K_1/T+b_T}{\widetilde a_T}},\ \eta_{\max}\Big\},
\]
and, therefore, the upper bound is given as
\begin{align}
\frac{1}{T}\sum_{t=0}^{T-1}\mathbb{E}\|\nabla f(\bar{\mathbf{x}}_t)\|^2
\;\le\; 2\sqrt{\widetilde a_T\!\left(\frac{K_1}{T}+b_T\right)}.\label{22}
\end{align}

There are two scenarios that need to be considered. Scenario (i) If $\tau_t=\tau_0/\sqrt{t+1}$, then $\overline{\tau^2}=\Theta\!\big(\tfrac{\log T}{T}\big)$, so
$b_T=\Theta\!\big(\tfrac{\log T}{\Delta\,T}\big)$ and $\widetilde a_T=(K_2+K_3\eta_{\max}^{2})+\Theta\!\big(\tfrac{\log T}{T}\big)$.
Defining $K_2' := K_2 + K_3\eta_{\max}^{2}$, we can write 
$\widetilde a_T = K_2' + \Theta\!\big(\tfrac{\log T}{T}\big)$. 
Note that $\tfrac{K_1}{T}=o(\tfrac{\log T}{T})$ and hence $\tfrac{K_1}{T}+b_T=\Theta(\tfrac{\log T}{T})$, while $\widetilde a_T=\Theta(1)$.
The RHS of \eqref{22} scales as 
$2\sqrt{\Theta(1)\cdot\Theta(\tfrac{\log T}{T})}
=\mathcal{O}\!\big(\sqrt{\tfrac{\log T}{T}}\big)$,
which yields
\[
\frac{1}{T}\sum_{t=0}^{T-1}\mathbb{E}\|\nabla f(\bar{\mathbf{x}}_t)\|^2
=\mathcal{O}\!\Big(\sqrt{\tfrac{\log T}{T}}\Big).
\]
Ignoring the slowly varying logarithmic factor, the convergence rate can be described in the soft-O form, as $\tilde{\mathcal{O}}(T^{-1/2})$.

Scenario (ii) If $\tau_t=\tau_0/(t+1)$, then $\overline{\tau^2}=\Theta\!\big(\tfrac{1}{T}\big)$, so
$b_T=\Theta\!\big(\tfrac{1}{\Delta\,T}\big)$ and $\widetilde a_T=(K_2+K_3\eta_{\max}^{2})+\Theta\!\big(\tfrac{1}{T}\big)$.
By defining $K_2' := K_2 + K_3\eta_{\max}^{2}$, we can write 
$\widetilde a_T = K_2' + \Theta\!\big(\tfrac{1}{T}\big)$. 
Note that $\tfrac{K_1}{T}=\Theta(\tfrac{1}{T})$ and $b_T=\Theta(\tfrac{1}{T})$, and hence $\tfrac{K_1}{T}+b_T=\Theta(\tfrac{1}{T})$, while $\widetilde a_T=\Theta(1)$.
The RHS of~\eqref{22} scales as 
$2\sqrt{\Theta(1)\cdot\Theta(\tfrac{1}{T})}
=\mathcal{O}\!\big(T^{-1/2}\big)$,
which yields
\[
\frac{1}{T}\sum_{t=0}^{T-1}\mathbb{E}\|\nabla f(\bar{\mathbf{x}}_t)\|^2
=\mathcal{O}\!\big(T^{-1/2}\big).
\]
This corresponds to the standard stochastic nonconvex convergence rate $\mathcal{O}(T^{-1/2})$ of centralized SGD.
In both scenarios, the hidden constants depend on the stochastic gradient variance ($\alpha$), the data heterogeneity ($\beta$),
and the spectral gap $(1-\sqrt{\delta})$ through $\Gamma$ and $\Delta$.

    The above discussion reveals that the triggering threshold $\{\tau_t\}$ plays a
decisive role in shaping both the communication cost and the achievable
accuracy. When $\tau_t \equiv 0$, every round involves communication and the
algorithm reduces to standard distributed learning, recovering the
$O(T^{-1/2})$ convergence rate of centralized SGD. When $\tau_t$ is a fixed
positive constant, communication is substantially reduced, but the ergodic
gradient norm converges only up to a non-vanishing steady-state bias
proportional to $\tau$, reflecting persistent model discrepancies. In contrast,
when $\tau_t$ decays over time (e.g., $\tau_t = \Theta(t^{-1/2})$ or 
$\tau_t = \Theta(t^{-1})$), the injected perturbation gradually diminishes and the
full $\mathcal{O}(T^{-1/2})$ rate is recovered. This indicates that properly decaying
thresholds enable the proposed event-triggered gossip method to match the
theoretical performance of full-communication decentralized SGD while
significantly reducing inter-node transmissions.

\section{Simulation Results}
In this section, we evaluate the performance of the proposed event-triggered gossip algorithm through extensive simulations. We first describe the experimental settings and datasets, and then examine the impact of the triggering threshold and network sparsity on learning accuracy and communication cost. Finally, we compare the proposed method with  state-of-the-art distributed learning schemes.
\subsection{Simulation Settings}

We consider a distributed learning system consisting of $M=20$ edge devices connected through an undirected communication graph.  
The learning task is image classification on the MNIST and Fashion-MNIST datasets.  
MNIST contains grayscale handwritten digit images (0–9) and is widely regarded as a relatively simple benchmark,
whereas Fashion-MNIST consists of grayscale clothing images from 10 apparel categories and is generally more challenging to learn due to higher visual complexity.

Among all samples, $10{,}000$ are used for training and $10{,}000$ for testing.  
Since the MNIST and  Fashion-MNIST  datasets contain 10 distinct classes, we partition the devices into 10 groups of equal size, where each group is assigned a dataset corresponding to one specific class. This setup ensures that the network experiences data heterogeneity, as described in~\cite{mcmahan2017communication}. The network sparsity is defined as the proportion of zero elements in the mixing matrix, specifically, the number of zero elements in $\mathbf{W}$ divided by \( n^2 \). Unless otherwise specified, we conduct simulation under network sparsity of  $0.3$.

Each local model is implemented as a  convolutional neural network (CNN) with the following structure:
two convolutional layers with $5\times5$ kernels and channel sizes $(1,10)$ and $(10,20)$,  
followed by a batch-normalization (or dropout) layer and two fully connected layers with dimensions $320\!-\!50\!-\!10$.  
The ReLU activation function is applied after each layer. The network outputs a 10-dimensional vector representing class logits.
All models are initialized identically and trained using SGD with a constant learning rate $\eta=0.02$.  
In the proposed event-triggered gossip algorithm, each device $i$ decides whether to communicate at round $t$ according to the triggered threshold
\begin{align}
	\tau_{t} = \epsilon \|\mathbf{x}_{0}\|,
\end{align}
where $\epsilon>0$ is a small relative threshold coefficient and $\mathbf{x}_{0}$ is the initial model for all devices.  The results presented  are based on the average of 30 Monte Carlo experiments.
\begin{figure}[t]
	\centering
	\includegraphics[width=0.9\linewidth]{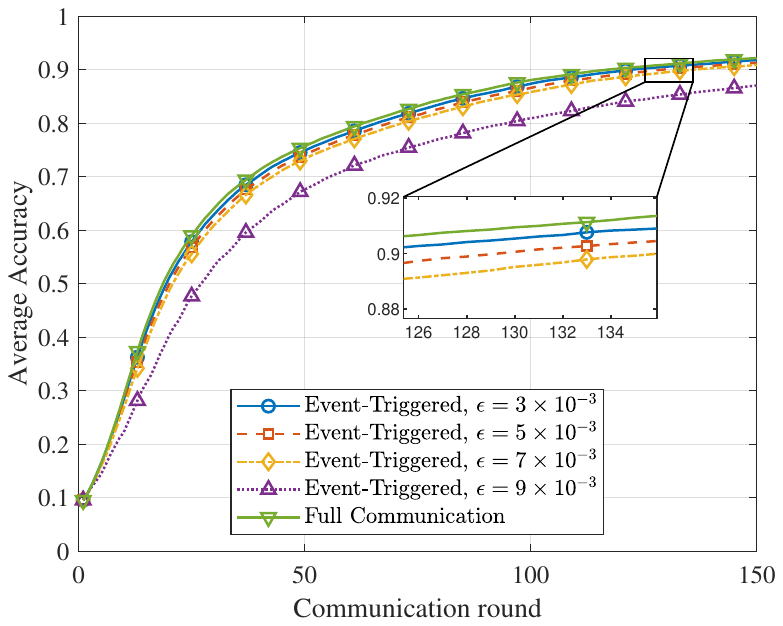}
	\caption{Test accuracy of the averaged model under different triggering thresholds $\epsilon$ (MNIST).}
	\label{fig:mnist_diff_epsilon_acc}
\end{figure}
\begin{figure}[t]
	\centering
	\includegraphics[width=0.9\linewidth]{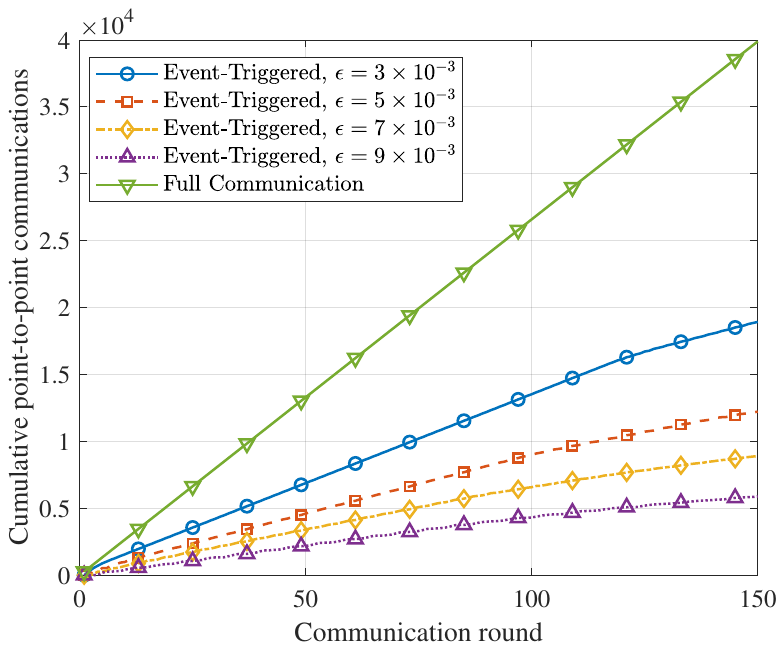}
	\caption{Cumulative point-to-point communication volume versus $\epsilon$ (MNIST).}
	\label{fig:mnist_diff_epsilon_comm}
\end{figure}
\begin{figure}[t]
	\centering
	\includegraphics[width=0.9\linewidth]{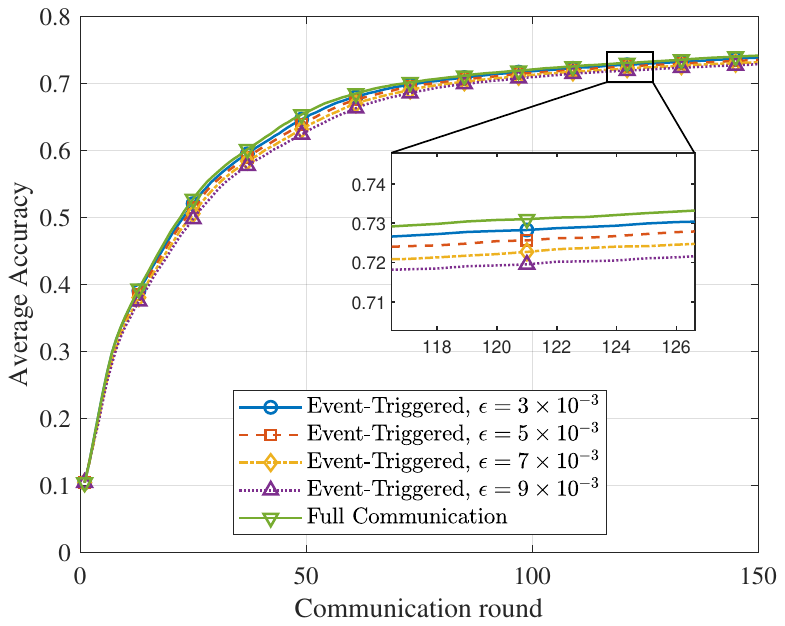}
	\caption{Test accuracy of the averaged model under different triggering thresholds $\epsilon$ (Fashion-MNIST).}
	\label{fig:fmnist_diff_epsilon_acc}
\end{figure}
\begin{figure}[t]
	\centering
	\includegraphics[width=0.9\linewidth]{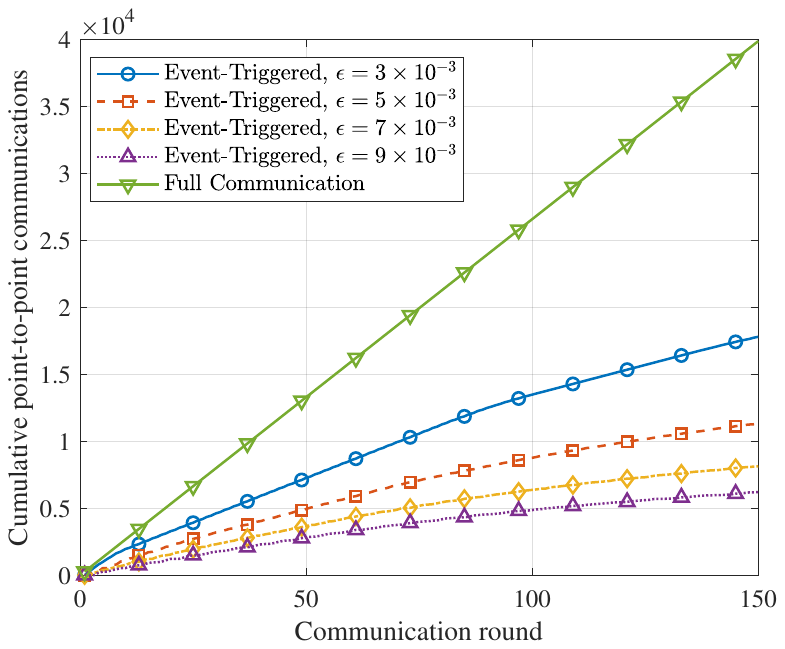}
	\caption{Cumulative point-to-point communication volume versus $\epsilon$ (Fashion-MNIST).}
	\label{fig:fmnist_diff_epsilon_comm}
\end{figure}

\subsection{Impact of Trigger Threshold on Accuracy and Communication}

Figs.~\ref{fig:mnist_diff_epsilon_acc}--\ref{fig:mnist_diff_epsilon_comm} 
and Figs.~\ref{fig:fmnist_diff_epsilon_acc}--\ref{fig:fmnist_diff_epsilon_comm}
show the impact of the triggering threshold $\epsilon$ on the test performance 
and communication cost for the MNIST and Fashion-MNIST datasets, respectively.  
The test accuracy is evaluated on the {averaged model}, i.e.,
\[
\bar{\mathbf{x}}_{t} \triangleq \frac{1}{M}\sum_{i=1}^{M}\mathbf{x}_{i,t}.
\]
The communication cost is measured by the total number of point-to-point \emph{one-way} transmissions among all devices; that is, each model message sent from a node to one of its neighbors is counted as one transmission. 
The {full communication} case corresponds to $\tau_t \equiv 0$, 
meaning that all devices exchange local models with their neighbors at every round.

As shown in Fig.~\ref{fig:mnist_diff_epsilon_acc}, 
the accuracy of the averaged model $\bar{\mathbf{x}}_{t}$ on MNIST 
gradually approaches that of the full communication case as $\epsilon$ decreases.  
A smaller $\epsilon$ makes the triggering condition, i.e.,
\[
\|\mathbf{x}_{i,t}-\hat{\mathbf{x}}_{i,t}\|\ge\tau_{t}=\epsilon\|\mathbf{x}_{0}\|,
\]
easier to satisfy, which results in more frequent inter-device communications, 
faster consensus, and higher accuracy.  
Correspondingly, Fig.~\ref{fig:mnist_diff_epsilon_comm} shows that 
the cumulative communication volume increases monotonically as $\epsilon$ decreases.
When $\epsilon$ is large, triggers are rare, leading to minimal communication but slower convergence;
as $\epsilon\!\to\!0$, the behavior converges to the full communication case, 
achieving the highest accuracy at the cost of the largest overhead.

The experiments on Fashion-MNIST, reported in 
Figs.~\ref{fig:fmnist_diff_epsilon_acc} and \ref{fig:fmnist_diff_epsilon_comm}, 
exhibit a similar trend.  
As $\epsilon$ decreases, the accuracy consistently improves while the total communication volume increases.  
However, the performance gap between different thresholds is smaller on Fashion-MNIST than on MNIST, 
indicating that the proposed event-triggered mechanism maintains stable accuracy even under sparser communication on more complex datasets.

\begin{table}[t]
	\centering
	\caption{Average Accuracy and Cumulative Point-to-Point Communication Volume After 150 Rounds}
	\label{tab:epsilon_comparison}
	\renewcommand{\arraystretch}{1.2}
	\setlength{\tabcolsep}{4pt}
	\begin{tabular}{lcccc}
		\toprule
		\multirow{2}{*}{Scheme} 
		& \multicolumn{2}{c}{MNIST} 
		& \multicolumn{2}{c}{Fashion-MNIST} \\ 
		\cmidrule(lr){2-3} \cmidrule(lr){4-5}
		& Accuracy & Comm. & Accuracy & Comm.  \\ 
		\midrule
		Full  Comm.      & 0.9219 & 39900 & 0.7422& 39900 \\
	$\epsilon=3\times10^{-3}$ & 0.9179& 18934 & 0.7393 & 17811 \\
		 $\epsilon=5\times10^{-3}$ & 0.9135 &  12230& 0.7360 & 11328 \\
		$\epsilon=7\times10^{-3}$ & 0.9087 & 8893 & 0.7325 & 8134 \\
		 $\epsilon=9\times10^{-3}$ & 0.8711 & 5853 & 0.7292 & 6203 \\
		\bottomrule
	\end{tabular}
	\footnotesize{Notes: “Comm.” is the cumulative number of point-to-point transmissions. 
	Results are averaged over trials, unless otherwise specified.}
\end{table}

Let $\mathrm{CR}\!=\!1-\tfrac{\mathrm{Comm}(\epsilon)}{\mathrm{Comm}(\text{Full})}$ denote the communication reduction ratio, 
and  $\mathrm{AD}
= \mathrm{Acc}(\mathrm{Full}) - \mathrm{Acc}(\epsilon)$ denote the accuracy drop relative to full communication (in percentage points).
In Table~\ref{tab:epsilon_comparison}, our event-triggered scheme achieves a marked reduction in point-to-point transmissions with only a marginal loss in accuracy:

\begin{itemize}
	\item \textbf{Key operating point ($\epsilon=5\times10^{-3}$).} 
	On {MNIST}, the proposed method reaches an average accuracy of $0.9135$ ($\mathrm{AD}=0.84$\, percentage points (pp), i.e., $0.91\%$ relative) 
	while using \textbf{$\mathrm{CR}=69.35\%$} fewer transmissions (\(12230/39900\)).  
	On {Fashion-MNIST}, it attains $0.7360$ (drop $0.62$\,pp, $0.84\%$ relative) with \textbf{$\mathrm{CR}=71.61\%$} fewer transmissions (\(11328/39900\)).  
	These results verify that our event-triggered gossip can preserve the performance of full communication with only about ${1}/{3}$ of the communication cost.
	
	\item \textbf{More aggressive saving ($\epsilon=7\times10^{-3}$).} 
	MNIST reduces transmissions by \textbf{$77.71\%$} (8893 vs. 39900) with an accuracy drop of only $1.32$\,pp ($1.43\%$).  
	Fashion-MNIST reduces by \textbf{$79.61\%$} (8134 vs. 39900) with a $0.97$\,pp ($1.31\%$) drop.
	
	\item \textbf{Conservative setting ($\epsilon=3\times10^{-3}$).} 
	Even with a smaller threshold, MNIST still saves \textbf{$52.55\%$} communication with merely $0.40$\,pp ($0.43\%$) loss;  
	Fashion-MNIST saves \textbf{$55.36\%$} with a $0.29$\,pp ($0.39\%$) loss.
\end{itemize}

Overall, the event-triggered mechanism provides a tunable accuracy--communication trade-off: 
As $\epsilon$ decreases, accuracy approaches that of full communication while communication cost smoothly increases.  
At $\epsilon=5\times10^{-3}$, the method achieves \emph{near-identical} performance to full communication yet consumes only \textbf{30.65\%} (MNIST) and \textbf{28.39\%} (Fashion-MNIST) of its point-to-point transmissions, i.e., communication savings of \textbf{69.35\%} and \textbf{71.61\%}, respectively.

\subsection{Performance Analysis under Different Network Sparsity}

To investigate the performance of the proposed scheme under varying network configurations, we conducted experiments with different levels of network sparsity. 

In Figs.~\ref{fig:mnist_diff_sparsity_acc} and~\ref{fig:mnist_diff_sparsity_comm}, we observe that: as the sparsity increases, both the point-to-point communication volume and the accuracy of our algorithm are affected. 
In Fig.~\ref{fig:mnist_diff_sparsity_comm}, we can see a noticeable decrease in the cumulative communication volume as the sparsity increases. This is because  each node has fewer neighbors to communicate with in sparser topologies, leading to fewer point-to-point transmissions when the communication is triggered. Consequently, as sparsity increases, the number of active communication links decreases, resulting in a reduction in overall communication volume.

On the other hand, Fig.~\ref{fig:mnist_diff_sparsity_acc} demonstrates that the accuracy of the algorithm decreases as the network sparsity increases. This performance degradation can be attributed to the non-i.i.d. data partitioning deployed in the experiments. As the sparsity grows, the communication between nodes becomes more sporadic, meaning that nodes are less able to collaborate effectively and synchronize their model parameters. This leads to slower consensus, as fewer nodes communicate in each round, and the model update process becomes less efficient. The lack of communication and collaboration among nodes hampers the model's ability to converge effectively.

In general, while higher sparsity reduces the communication overhead, it  diminishes the accuracy of the algorithm due to the decreased collaboration between nodes. 

\begin{figure}[t]
	\centering
	\includegraphics[width=0.9\linewidth]{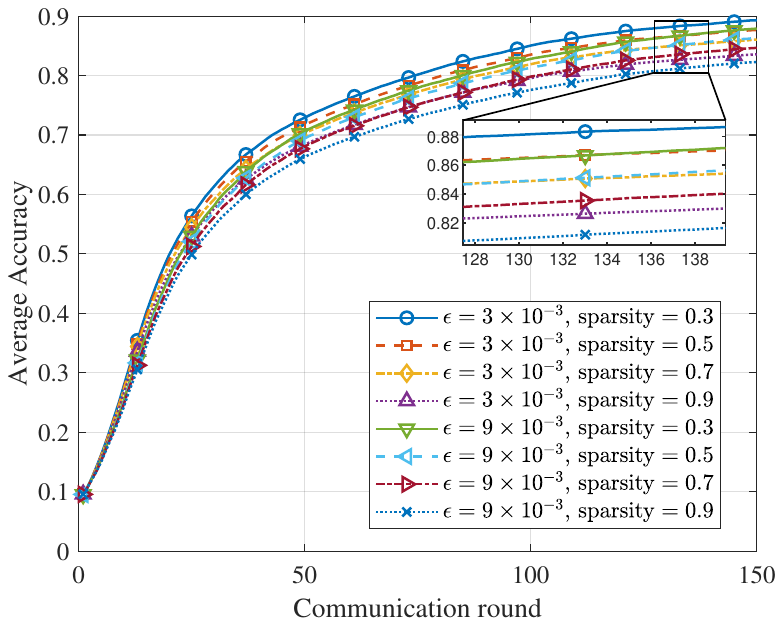}
	\caption{Average accuracy under different levels of network sparsity.}
	\label{fig:mnist_diff_sparsity_acc}
\end{figure}

\begin{figure}[t]
	\centering
	\includegraphics[width=0.9\linewidth]{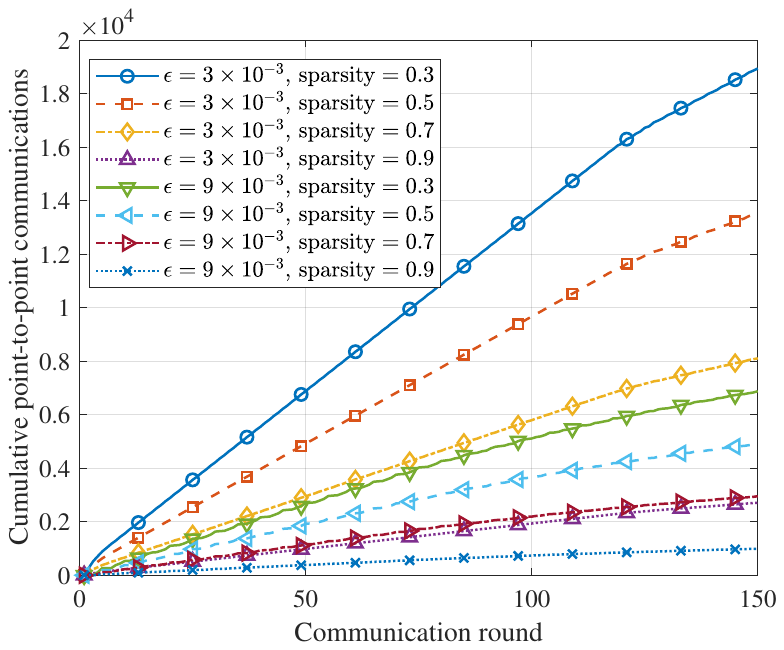}
	\caption{Cumulative point-to-point communication volume under different levels of network sparsity.}
	\label{fig:mnist_diff_sparsity_comm}
\end{figure}

\subsection{Comparison with State-of-the-art Schemes}

To comprehensively evaluate the effectiveness of the proposed event-triggered gossip algorithm, we compare it with  representative distributed learning schemes that represent different communication mechanisms. These schemes include both conventional and state-of-the-art strategies widely adopted in distributed learning.

\textbf{1) Full Communication \cite{koloskova2020unified}:}  
This scheme performs standard gossip averaging in every iteration \cite{koloskova2020unified}, where all devices exchange information with all of their neighbors. It serves as an upper-bound baseline, representing the best achievable performance with maximum communication overhead, corresponding to the case $\tau_t \equiv 0$.

\textbf{2) Event-Triggered Gossip (Proposed):}  
In the proposed method, each device triggers communication only when the deviation between its current model $\mathbf{x}_{i,t}$ and its latest broadcast snapshot $\hat{\mathbf{x}}_{i,t}$ exceeds a relative threshold, i.e., 
\[
\|\mathbf{x}_{i,t}-\hat{\mathbf{x}}_{i,t}\|\ge \tau_{t}=\epsilon\|\mathbf{x}_{0}\|.
\]
Only the triggered devices update their snapshots and communicate with neighbors. 

\textbf{3) Periodic Gossip \cite{haddadpour2019local}:}  
In this time-triggered scheme, devices perform global gossip communication once every $K_p$ rounds and conduct only local updates in-between \cite{haddadpour2019local}. The periodic exchange effectively reduces communication frequency but may slow down model consensus, especially when $K_p$ is large.

\textbf{4) Probabilistic Gossip \cite{zhai2025decentralized}:}
In this scheme, each directed communication link $(i,j)$ is activated
independently according to a Bernoulli random variable. At round $t$, node $i$ transmits its model to neighbor $j$ with
probability $p_{ij}$, i.e., the transmission event is
$a_{ij,t} \sim \mathrm{Bernoulli}(p_{ij})$. If $a_{ij,t}=1$, the message
is sent; if $a_{ij,t}=0$, no transmission occurs and node $j$ reuses its
own model. See \cite{zhai2025decentralized} for details.

\textbf{5) Variable Working Nodes \cite{7462298}:}  
Inspired by the variable node activation strategy proposed in \cite{7462298}, this scheme allows each node to be activated with probability $p_k$ at each round. Only activated nodes perform both communication and gradient updates within the active subgraph, whereas inactive nodes remain unchanged. This joint control of communication and computation further reduces system overhead at the expense of slower convergence in highly sparse activation regimes.

These five schemes cover a broad range of distributed learning communication paradigms, including event-triggered, time-triggered, and probabilistic triggering mechanisms.

We first provide one representative parameter configuration for
each scheme to give a concrete comparison. In this setting, we use
$\epsilon=0.005$ for event-triggered gossip, $K_p=5$ for periodic
gossip, and $p_{{ij}}=0.5, \forall i,j$ and $p_k=0.3$ for the
probabilistic and variable working-node schemes. The corresponding
performance after 150 training rounds is summarized in
Table~\ref{tab:scheme_mnist_fmnist}.
Here, the proposed
event-triggered gossip achieves comparable accuracy  to full
communication while reducing the total point-to-point transmissions
by more than a factor of three on both datasets. Periodic gossip
and variable working-node schemes consume fewer communications but
suffer noticeable accuracy degradation, whereas probabilistic gossip
retains reasonable accuracy but requires significantly more
communications than the proposed method.

To  visualize how accuracy varies with the
communication budget, we  sweep the key
communication-related parameters of each scheme and depict their
accuracy--communication trade-offs. For each scheme, we sweep its primary
communication-control hyperparameters over a range of values
(e.g., the relative triggering threshold $\epsilon$ for
event-triggered gossip, the communication period
$K_{\mathrm{p}}$ for periodic gossip, and the activation
probabilities $p_{{ij}}, \forall i,j$ and $p_k$ for the probabilistic
and variable-working schemes). For each configuration, we run
training for $T=150$ rounds and record the pair consisting of
the final test accuracy of the averaged model and the
cumulative number of point-to-point transmissions among all
devices. The resulting accuracy--communication pairs are
plotted in Fig.~\ref{fig:pareto-mnist} for MNIST and
Fig.~\ref{fig:pareto-fmnist} for Fashion-MNIST. In these plots,
desirable operating points lie towards the upper-left corner,
corresponding to high accuracy with low communication volume.

On the MNIST dataset
(Fig.~\ref{fig:pareto-mnist}), the full-communication scheme
naturally achieves the highest accuracy but also incurs the
largest communication volume, placing its operating points near
the upper-right corner. In contrast, the proposed
event-triggered gossip algorithm generates a suite of points
that cluster close to the upper-left region, forming an
approximate Pareto frontier. For a wide range of target
accuracies (e.g., above $0.90$), event-triggered gossip attains
comparable performance to full communication while requiring
significantly fewer point-to-point transmissions. In particular,
for a given communication budget, the proposed method achieves
higher accuracy than the periodic gossip and variable-working
schemes, whose points lie further down and/or to the right,
indicating either slower convergence or pronounced accuracy
loss under aggressive communication reduction. Probabilistic
gossip yields a set of intermediate trade-offs, but its points
are  dominated by those of the proposed method in
the accuracy--communication plane.

A similar pattern is observed on the Fashion-MNIST dataset
(Fig.~\ref{fig:pareto-fmnist}). The event-triggered gossip
scheme again occupies the upper-left region of the plot and
remains close to the full-communication accuracy while
substantially reducing communication. Periodic gossip and
variable-working schemes exhibit lower accuracy for comparable
communication volumes, reflecting the adverse impact of
infrequent synchronization and partial node activation on
consensus quality. Probabilistic gossip provides moderate
savings but still requires noticeably more transmissions than
the proposed method to reach the same accuracy level.

These the Pareto-style results  demonstrate that
the proposed event-triggered gossip algorithm offers the most
favorable accuracy--communication trade-off among all
considered schemes. Across both MNIST and Fashion-MNIST, its
operating points consistently lie on or near the empirical
Pareto frontier, enabling substantial reductions in
communication volume while preserving almost the same accuracy
as the full-communication baseline.

\begin{table}[t]
	\centering
	\caption{Average Accuracy and Cumulative Point-to-Point Communication Volume After 150 Rounds}
	\label{tab:scheme_mnist_fmnist}
	\renewcommand{\arraystretch}{1.2}
	\setlength{\tabcolsep}{5pt}
	\begin{tabular}{lcccc}
		\toprule
		\multirow{2}{*}{Scheme} 
		& \multicolumn{2}{c}{MNIST} 
		& \multicolumn{2}{c}{Fashion-MNIST} \\
		\cmidrule(lr){2-3}\cmidrule(lr){4-5}
		& Accuracy & Comm. & Accuracy & Comm. \\
		\midrule
		Full communication         & 0.9219 & 39900 & 0.7422 & 39900 \\
		Event-triggered    & 0.9135 & 12230 & 0.7360 & 11328 \\
		Periodic gossip         & 0.8454 &  7980 & 0.6884 &  7980 \\
		Probabilistic gossip      & 0.9134 & 19904 & 0.7357 & 19904 \\
		Variable working     & 0.7406 & 11939 & 0.6430 & 11939 \\
		\bottomrule
	\end{tabular}
	
	\vspace{2pt}

\end{table}

\begin{figure}[t]
  \centering
  \includegraphics[width=0.9\linewidth]{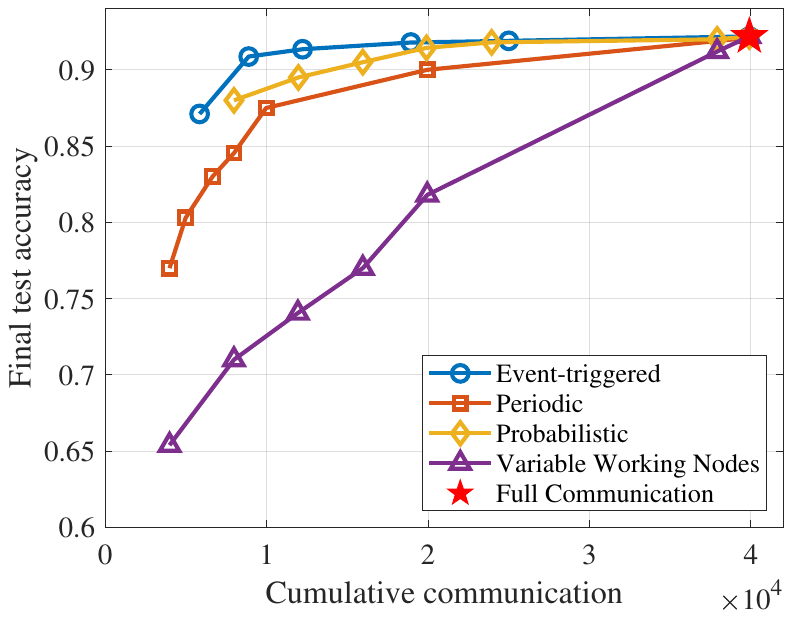}
  \caption{Test accuracy versus cumulative point-to-point
  communication volume for different distributed learning schemes
  on the MNIST dataset.}
  \label{fig:pareto-mnist}
\end{figure}

\begin{figure}[t]
  \centering
  \includegraphics[width=0.9\linewidth]{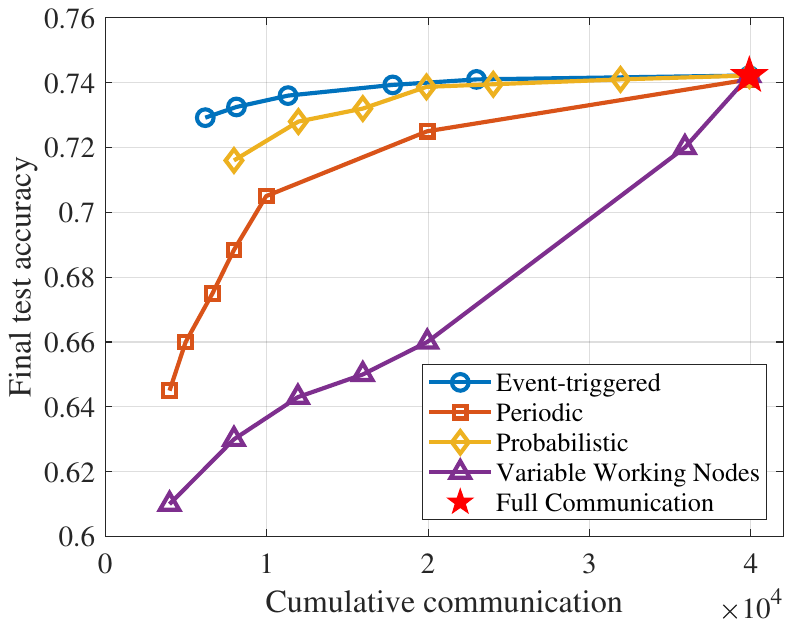}
  \caption{Test accuracy versus cumulative point-to-point
  communication volume for different distributed learning schemes
  on the Fashion-MNIST dataset. }
  \label{fig:pareto-fmnist}
\end{figure}

\section{Conclusions}
In this paper, we developed a distributed learning framework based on an event-triggered gossip mechanism, which allows each device to autonomously decide when to communicate with its neighbors according to local model deviations. 
We conducted a rigorous convergence analysis and discussed the convergence properties of the proposed scheme under different triggering thresholds, providing theoretical insights into the trade-off between its accuracy and communication efficiency. 
Extensive experiments on MNIST and Fashion-MNIST verified that the proposed framework can drastically reduce point-to-point communication while maintaining high learning accuracy.

	\bibliographystyle{IEEEtran}
	\bibliography{IEEEabrv,mybib}
    	\begin{IEEEbiography}[{\includegraphics[width=1in,height =1.25in,clip,keepaspectratio]{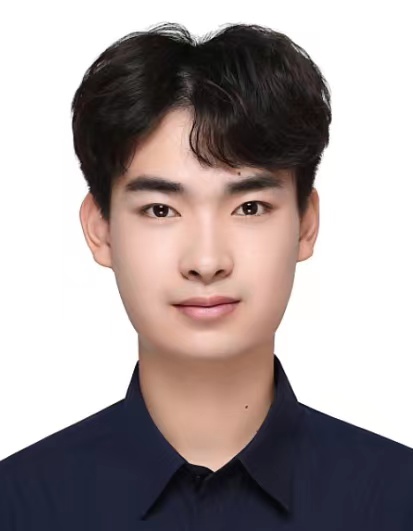}}]
		{Zhiyuan Zhai} received the B.S. degree in Communication Engineering from the School of Information and Communication Engineering, University of Electronic Science and Technology of China, in 2022. He is currently pursuing the Ph.D. degree  with the Department of Communication Science and Engineering, Fudan University, Shanghai, China.
		His research interests include  machine learning, signal processing and mobile edge computing.
	\end{IEEEbiography}

	\begin{IEEEbiography}[{\includegraphics[width=1in,height =1.25in,clip,keepaspectratio]{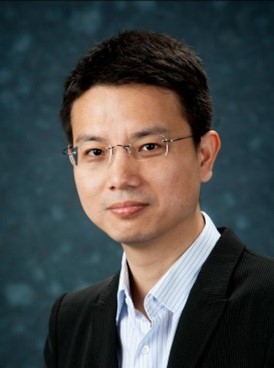}}]
		{Xiaojun Yuan} (Fellow, IEEE) received the PhD degree in electrical engineering from the City University
of Hong Kong, in 2009. From 2009 to 2011, he was
a research fellow with the Department of Electronic
Engineering, City University of Hong Kong. He was
also a visiting scholar with the Department of Electrical Engineering, University of Hawaii at Manoa, in
spring and summer 2009, as well as in the same period
of 2010. From 2011 to 2014, he was a research assistant professor with the Institute of Network Coding,
The Chinese University of Hong Kong. From 2014 to
2017, he was an assistant professor with the School of Information Science
and Technology, ShanghaiTech University. He is now a professor with the
National Key Laboratory of Wireless Communications, University of Electronic
Science and Technology of China. His research interests cover a broad range of
signal processing, machine learning, and wireless communications, including
but not limited to intelligent communications, structured signal reconstruction,
Bayesian approximate inference, distributed learning, etc. He has published
more than 320 peer-reviewed research papers in the leading international journals
and conferences in the related areas. He has served on many technical programs
for international conferences. He was an editor of IEEE leading journals, including IEEE Transactions on Wireless Communications and IEEE Transactions on
Communications. He was a co-recipient of IEEE Heinrich Hertz Award 2022,
and a co-recipient of IEEE Jack Neubauer Memorial Award 2025.
	\end{IEEEbiography}

    \begin{IEEEbiography}[{\includegraphics[width=1in,height=1.25in,clip,keepaspectratio]{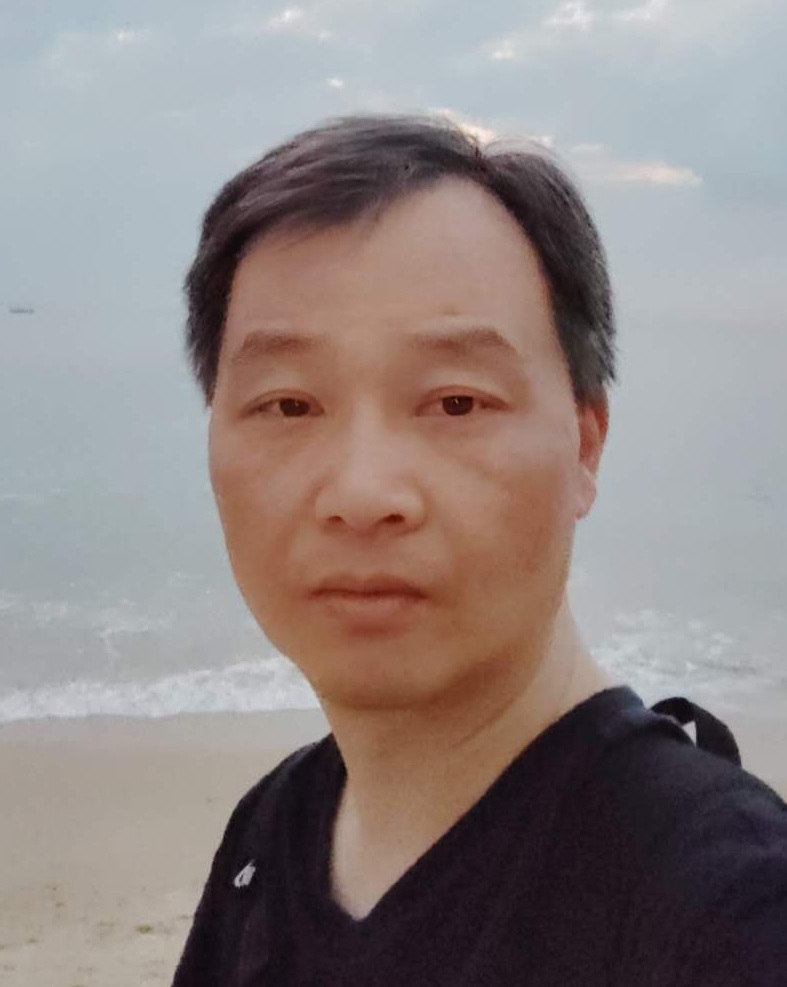}}]
{Wei Ni} (M’09-SM’15-F’24) received the B.E. and Ph.D. degrees in Electronic Engineering from Fudan University, Shanghai, China, in 2000 and 2005, respectively. He is the Associate Dean (Research) in the School of Engineering, Edith Cowan University, Perth, and a Conjoint Professor at the University of New South Wales, Sydney, Australia. He is also a Technical Expert at Standards Australia with a focus on the international standardization of Big Data and AI. He was a Deputy Project Manager at the Bell Labs, Alcatel/Alcatel-Lucent from 2005 to 2008; a Senior Research Engineer at Nokia from 2008 to 2009; and a Senior Principal Research Scientist and Group Leader at the Commonwealth Scientific and Industrial Research Organisation (CSIRO) from 2009 to 2025. His research interest lies in distributed and trusted learning with constrained resources, quantum Internet, and their applications to system efficiency, integrity, and resilience. He is a co-recipient of the ACM Conference on Computer and Communications Security (CCS) 2025 Distinguished Paper Award, and four Best Paper Awards. He has been an Editor for IEEE Transactions on Wireless Communications since 2018, IEEE Transactions on Vehicular Technology since 2022, IEEE Transactions on Information Forensics and Security and IEEE Communication Surveys and Tutorials since 2024, and IEEE Transactions on Network Science and Engineering and IEEE Transactions on Cloud Computing since 2025. He was Chair of the IEEE VTS NSW Chapter (2020 -- 2022), Track Chair for VTC-Spring 2017, Track Co-chair for IEEE VTC-Spring 2016, Publication Chair for BodyNet 2015, and Student Travel Grant Chair for WPMC 2014.
\end{IEEEbiography}
	
	\begin{IEEEbiography}[{\includegraphics[width=1in,height =1.25in,clip,keepaspectratio]{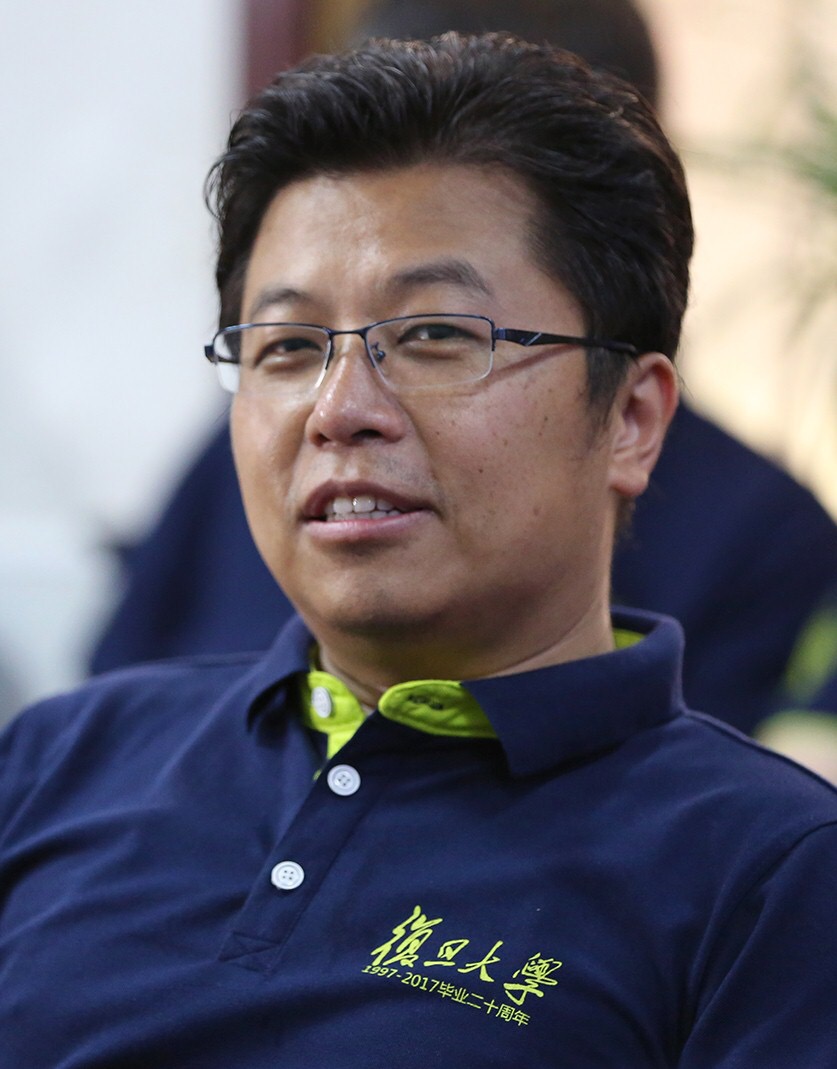}}]
		{Xin Wang} (Fellow, IEEE)  received the BSc and MSc
degrees from Fudan University, Shanghai, China, in
1997 and 2000, respectively, and the PhD degree from
Auburn University, Auburn, Alabama, in 2004, all
in electrical engineering. From September 2004 to
August 2006, he was a post-doctoral research associate with the Department of Electrical and Computer
Engineering, University of Minnesota, Minneapolis.
In August 2006, he joined the Department of Electrical Engineering, Florida Atlantic University, Boca
Raton, Florida, as an assistant professor, then was
promoted to a tenured associate professor, in 2010. He is currently a distinguished professor and the chair of the Department of Communication Science
and Engineering, Fudan University. His research interests include stochastic
network optimization, energyefficient communications, cross-layer design, and
signal processing for communications. He is a member of the Signal Processing
for Communications and Networking Technical Committee of IEEE Signal
Processing Society. He is a senior area editor of the IEEE Transactions on Signal
Processing and an editor of the IEEE Transactions on Wireless Communications.
In the past, he served as an associate editor for the IEEE Transactions on Signal
Processing, an editor for the IEEE Transactions on Vehicular Technology, and
an associate editor for the IEEE Signal Processing Letters. He is a distinguished
speaker of the IEEE Vehicular Technology Society.
	\end{IEEEbiography}

\begin{IEEEbiography}[{\includegraphics[width=1in,height =1.25in,clip,keepaspectratio]{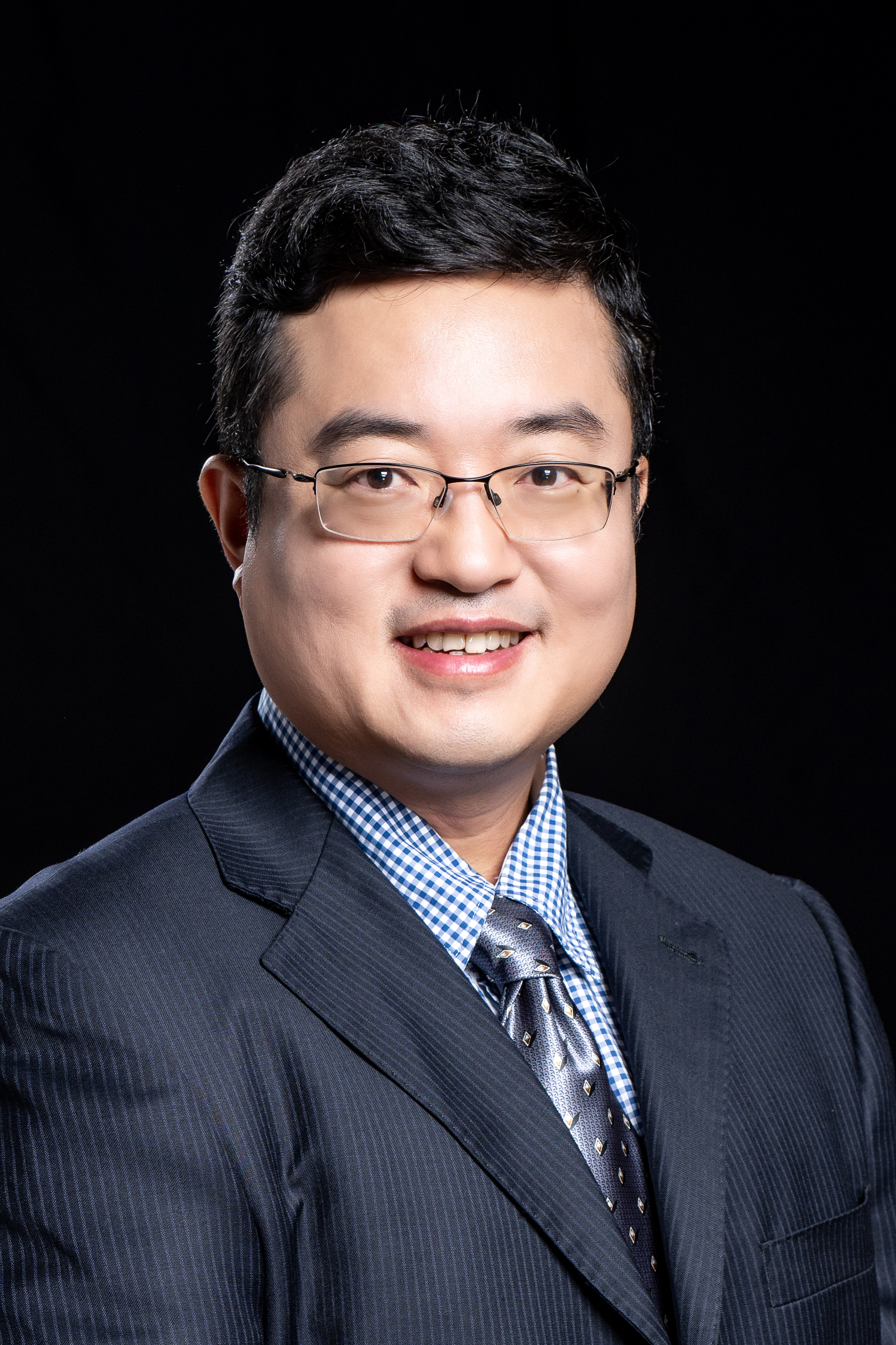}}] {Rui Zhang} (S'00-M'07-SM'15-F'17) received the B.Eng. (first-class Hons.) and M.Eng. degrees from the National University of Singapore, Singapore, and the Ph.D. degree from the Stanford University, Stanford, CA, USA, all in electrical engineering.
 
From 2007 to 2009, he worked as a research scientist at the Institute for Infocomm Research, ASTAR, Singapore. In 2010, he joined the Department of Electrical and Computer Engineering of National University of Singapore, where he is now a Provost’s Chair Professor. He is also an Adjunct Professor with the School of Science and Engineering, The Chinese University of Hong Kong, Shenzhen, China. He has published over 600 papers, all in the field of wireless communications and networks. He has been listed as a Highly Cited Researcher by Thomson Reuters/Clarivate Analytics since 2015. His current research interests include intelligent surfaces, reconfigurable antennas, radio mapping, non-terrestrial communications, wireless power transfer, AI and optimization methods.      

He was the recipient of the 6th IEEE Communications Society Asia-Pacific Region Best Young Researcher Award in 2011, the Young Researcher Award of National University of Singapore in 2015, the Wireless Communications Technical Committee Recognition Award in 2020, the IEEE Signal Processing and Computing for Communications (SPCC) Technical Recognition Award in 2021, and the IEEE Communications Society Technical Committee on Cognitive Networks (TCCN) Recognition Award in 2023. His works received 18 IEEE Best Journal Paper Awards, including the IEEE Marconi Prize Paper Award in Wireless Communications in 2015 and 2020, the IEEE Signal Processing Society Best Paper Award in 2016, the IEEE Communications Society Heinrich Hertz Prize Paper Award in 2017, 2020 and 2022, the IEEE Communications Society Stephen O. Rice Prize in 2021, etc. He served for over 30 international conferences as the TPC co-chair or an organizing committee member. He was an elected member of the IEEE Signal Processing Society SPCOM Technical Committee from 2012 to 2017 and SAM Technical Committee from 2013 to 2015. He served as the Vice Chair of the IEEE Communications Society Asia-Pacific Board Technical Affairs Committee from 2014 to 2015, a member of the Steering Committee of the IEEE Wireless Communications Letters from 2018 to 2021, a member of the IEEE Communications Society Wireless Communications Technical Committee (WTC) Award Committee from 2023 to 2025. He was a Distinguished Lecturer of IEEE Signal Processing Society and IEEE Communications Society from 2019 to 2020. He served as an Editor for several IEEE journals, including the IEEE TRANSACTIONS ON WIRELESS COMMUNICATIONS from 2012 to 2016, the IEEE JOURNAL ON SELECTED AREAS IN COMMUNICATIONS: Green Communications and Networking Series from 2015 to 2016, the IEEE TRANSACTIONS ON SIGNAL PROCESSING from 2013 to 2017, the IEEE TRANSACTIONS ON GREEN COMMUNICATIONS AND NETWORKING from 2016 to 2020, and the IEEE TRANSACTIONS ON COMMUNICATIONS from 2017 to 2022. He now serves as an Editorial Board Member of npj Wireless Technology, and the Chair of the IEEE Communications Society Wireless Communications Technical Committee (WTC) Award Committee. He is a Fellow of the Academy of Engineering Singapore. \end{IEEEbiography}

	\begin{IEEEbiography}[{\includegraphics[width=1in,height =1.25in,clip,keepaspectratio]{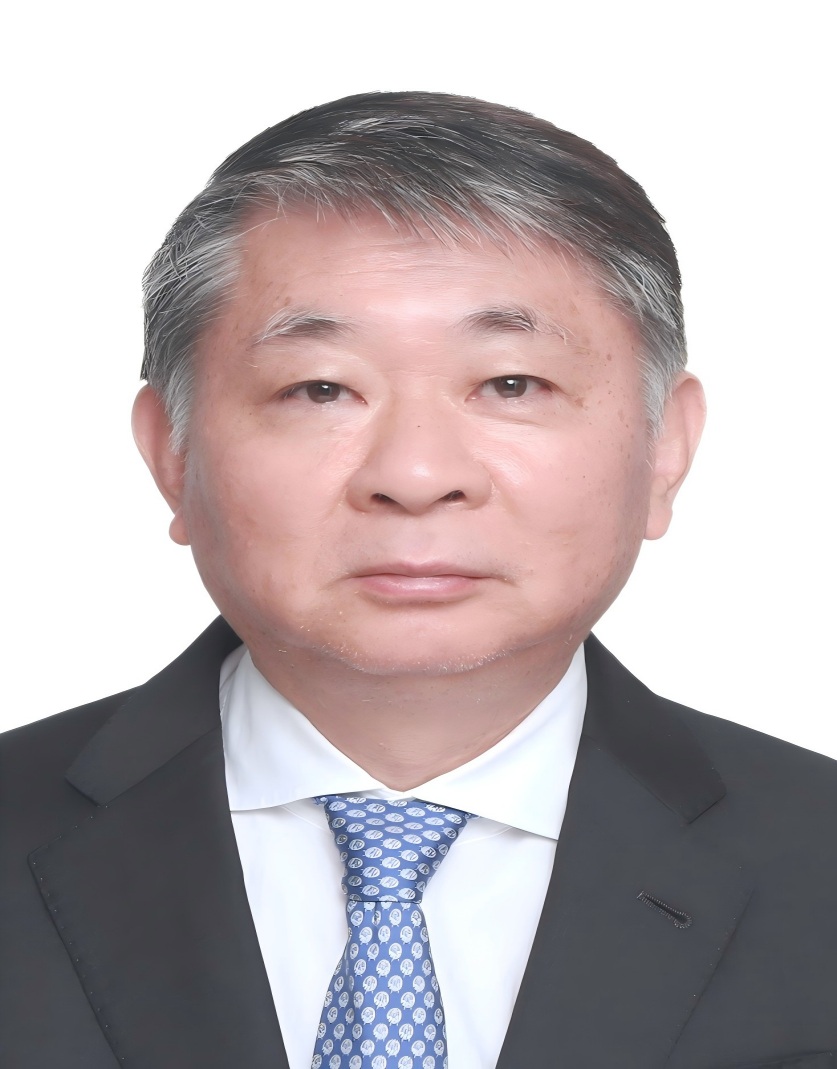}}]
		{Geoffrey Ye Li} (Fellow, IEEE)  is currently a Chair Professor at Imperial College London, UK.  Before joining Imperial in 2020, he was with Georgia Tech and AT\&T Labs – Research (previous Bell Labs) for 25 years in total. He made fundamental contributions to orthogonal frequency division multiplexing (OFDM) for wireless communications, established a framework on resource cooperation in wireless networks, and pioneered deep learning for communications. In these areas, he has published over 700 journal and conference papers in addition to over 40 granted patents. His publications have been cited over 85,000 times with an H-index over 130. He has been listed as a Highly Cited Researcher by Clarivate/Web of Science almost every year.
Dr. Geoffrey Ye Li was elected to Fellow of the Royal Academic of Engineering (FREng), IEEE Fellow, and IET Fellow for his contributions to signal processing for wireless communications. He received 2024 IEEE Eric E. Sumner Award, 2019 IEEE ComSoc Edwin Howard Armstrong Achievement Award, and several other awards from IEEE Signal Processing, Vehicular Technology, and Communications Societies.
\end{IEEEbiography}

\clearpage

\appendices

\section{Proof of Theorem 1}\label{appb}
{\textbf{Step 1}: Average iterate.}
Since $\mathbf{W}\mathbf{1}=\mathbf{1}$, multiplying the update 
$\mathbf{X}_{t+1}=\mathbf{X}_t\mathbf{W}-\eta\,\mathbf{G}_t+\mathbf{V}_t$
by $\mathbf{1}/n$ gives
\begin{equation}
	\bar{\mathbf{x}}_{t+1}
	= \bar{\mathbf{x}}_{t}
	- \eta\,\bar{\mathbf{g}}_{\,t}
	+ \bar{\mathbf{e}}_{\,t},
	\quad
	\bar{\mathbf{g}}_{\,t}= \mathbf{G}_t\frac{\mathbf{1}}{n},\ \
	\bar{\mathbf{e}}_{\,t}= \mathbf{V}_t\frac{\mathbf{1}}{n}.
	\label{eq:avg-dyn}
\end{equation}

{{\textbf{Step 2}: One-step descent on $f(\cdot)$.}
Assume $0<\eta\le 1/L$. Using $\bar{\mathbf{x}}_{t+1}=\bar{\mathbf{x}}_t-\eta(\bar{\mathbf{g}}_{\,t}-\bar{\mathbf{e}}_{\,t}/\eta)$,
\begin{align}
	&\quad\,\,\mathbb{E}f(\bar{\mathbf{x}}_{t+1})\nonumber\\
	&= \mathbb{E}f\big(\bar{\mathbf{x}}_t-\eta(\bar{\mathbf{g}}_{t}-\bar{\mathbf{e}}_{t}/\eta)\big) \nonumber\\
	&\overset{(a)}{\le} \mathbb{E}f(\bar{\mathbf{x}}_t)
	- \eta\,\mathbb{E}\!\left\langle \nabla f(\bar{\mathbf{x}}_t),\,\bar{\mathbf{g}}_{t}-\bar{\mathbf{e}}_{t}/\eta\right\rangle
+ \frac{L\eta^2}{2}\,\mathbb{E}\big\|\bar{\mathbf{g}}_{t}-\bar{\mathbf{e}}_{t}/\eta\big\|^2 \nonumber\\
	&\overset{(b)}{=} \mathbb{E}f(\bar{\mathbf{x}}_t)
	- \frac{\eta}{2}\,\mathbb{E}\|\nabla f(\bar{\mathbf{x}}_t)\|^2
	- \frac{\eta}{2}\,\mathbb{E}\big\|\bar{\mathbf{g}}_{t}-\bar{\mathbf{e}}_{t}/\eta\big\|^2 \nonumber\\
	& {}+ \frac{\eta}{2}\,\mathbb{E}\Big\|\nabla f(\bar{\mathbf{x}}_t)-\big(\bar{\mathbf{g}}_{t}-\bar{\mathbf{e}}_{t}/\eta\big)\Big\|^2
	+ \frac{L\eta^2}{2}\,\mathbb{E}\big\|\bar{\mathbf{g}}_{t}-\bar{\mathbf{e}}_{t}/\eta\big\|^2
	\nonumber\\
	&\overset{(c)}{\le} \mathbb{E}f(\bar{\mathbf{x}}_t)
	- \frac{\eta}{2}\,\mathbb{E}\|\nabla f(\bar{\mathbf{x}}_t)\|^2
	+ \frac{\eta}{2}\,\mathbb{E}\big\|\nabla f(\bar{\mathbf{x}}_t)-\bar{\mathbf{g}}_{t}+\bar{\mathbf{e}}_{t}/\eta\big\|^2
	\nonumber\\
	&\overset{(d)}{\le} \mathbb{E}f(\bar{\mathbf{x}}_t)
	- \frac{\eta}{2}\mathbb{E}\|\nabla f(\bar{\mathbf{x}}_t)\|^2
	\!+ \!\eta{\mathbb{E}\|\nabla f(\bar{\mathbf{x}}_t)-\bar{\mathbf{g}}_{t}\|^2}
	\!+ \frac{1}{\eta}\mathbb{E}\|\bar{\mathbf{e}}_{t}\|^2\!\!\notag
	\\&= \mathbb{E}f(\bar{\mathbf{x}}_t)
	- \frac{\eta}{2}\mathbb{E}\|\nabla f(\bar{\mathbf{x}}_t)\|^2
	\!+ \!\eta T_1
	\!+ \frac{1}{\eta}\mathbb{E}\|\bar{\mathbf{e}}_{t}\|^2\!\!
	\label{eq:desc-core}
\end{align}
where 
($a$) is due to the $L$-smoothness, i.e.,
$f(y)\le f(x)+\langle\nabla f(x),y-x\rangle+\frac{L}{2}\|y-x\|^2$,
($b$) is based on  $2\langle a,b\rangle=\|a\|^2+\|b\|^2-\|a-b\|^2$, 
($c$) is due to  $\eta\le 1/L$, and 
($d$) is based on $\|u+v\|^2\le 2\|u\|^2+2\|v\|^2$.

{{\textbf{Step 3}: Bounding $T_1$.}
Recall $T_1=\mathbb{E}\|\nabla f(\bar{\mathbf{x}}_t)-\bar{\mathbf{g}}_{\,t}\|^2$ with 
$\bar{\mathbf{x}}_t=\mathbf{X}_t\frac{\mathbf{1}}{n}$ and 
$\bar{\mathbf{g}}_{\,t}=\mathbf{G}_t\frac{\mathbf{1}}{n}
= \frac{1}{n}\sum_{i=1}^n \mathbf{g}_{i,t}$.
Add and subtract the average true gradient:
\begin{align}
	T_1
	&= \mathbb{E}\Big\|\!
	{\nabla f(\bar{\mathbf{x}}_t)\!-\!\frac{1}{n}\sum_{i=1}^n \nabla f_i(\mathbf{x}_{i,t})}
	\!+\! {\frac{1}{n}\sum_{i=1}^n \!\big(\nabla f_i(\mathbf{x}_{i,t})-\mathbf{g}_{i,t}\big)}\!
	\Big\|^2 \nonumber\\
	&= \mathbb{E}\|A_t\|^2 + \mathbb{E}\|B_t\|^2 + 2\,\mathbb{E}\langle A_t,B_t\rangle,
	\label{eq:T1-split}
\end{align}
where $\nabla f(\bar{\mathbf{x}}_t)-\frac{1}{n}\sum_{i=1}^n \nabla f_i(\mathbf{x}_{i,t})=A_t$, and $\frac{1}{n}\sum_{i=1}^n(\nabla f_i(\mathbf{x}_{i,t})-\mathbf{g}_{i,t})= B_t$.
By unbiasedness, $\mathbb{E}[\mathbf{g}_{i,t}]=\nabla f_i(\mathbf{x}_{i,t})$, hence
$\mathbb{E}[B_t]=\mathbf{0}$ and the cross term vanishes:
$\mathbb{E}\langle A_t,B_t\rangle=\mathbb{E}\langle A_t,\mathbb{E}[B_t]\rangle=0$.
Therefore,
\begin{equation}
	T_1 \;=\; \mathbb{E}\|A_t\|^2 + \mathbb{E}\|B_t\|^2.
	\label{eq:T1-two}
\end{equation}

For $\mathbb{E}\|B_t\|^2$, we have
\begin{IEEEeqnarray}{rCl}
	\mathbb{E}\|B_t\|^2
	&=& \mathbb{E}\Big\|\frac{1}{n}\sum_{i=1}^n\big(\nabla f_i(\mathbf{x}_{i,t})-\mathbf{g}_{i,t}\big)\Big\|^2 \nonumber\\
	&\le& \frac{1}{n^2}\sum_{i=1}^n \mathbb{E}\big\|\nabla f_i(\mathbf{x}_{i,t})-\mathbf{g}_{i,t}\big\|^2
	\;\overset{(a)}{\le}\; \frac{\alpha^2}{n},
	\label{eq:T1-var}
\end{IEEEeqnarray}
where  ($a$) is based on Assumption~\ref{ass:variance}.


{
	Combining \eqref{eq:desc-core} with the decomposition 
	$\mathbb{E}\|\nabla f(\bar{\mathbf{x}}_t)-\bar{\mathbf{g}}_{\,t}\|^2
	= T_2 + \frac{\alpha^2}{n}$, where
	\[
	T_2 \;\triangleq\; \mathbb{E}\Big\|\nabla f(\bar{\mathbf{x}}_t)
	- \frac{1}{n}\textstyle\sum_{i=1}^n \nabla f_i(\mathbf{x}_{i,t})\Big\|^2,
	\]
	we obtain
\begin{multline}
	\mathbb{E}f(\bar{\mathbf{x}}_{t+1})
	\le \mathbb{E}f(\bar{\mathbf{x}}_{t})
	-\frac{\eta}{2}\,\mathbb{E}\|\nabla f(\bar{\mathbf{x}}_{t})\|^{2}
	+ \eta\,T_{2} \\
	+ \frac{\eta\alpha^{2}}{n}
	+ \frac{1}{\eta}\,\mathbb{E}\|\bar{\mathbf{e}}_{t}\|^{2}.
	\label{eq:iterate-ineq}
\end{multline}
	The last term on the RHS of \eqref{eq:iterate-ineq} captures the average perturbation due to the event-triggered communication in round $t$.
	
{{\textbf{Step 4}: Bounding $T_2$ via consensus error.}
	By Jensen's inequality and the $L$-smoothness of $\{f_i\}$, it follows that
	\begin{align}
		T_2
		&= \mathbb{E}\Big\| \frac{1}{n}\textstyle\sum_{i=1}^n \big(\nabla f_i(\bar{\mathbf{x}}_t)-\nabla f_i(\mathbf{x}_{i,t})\big)\Big\|^2\nonumber\\
		&\le \frac{1}{n}\sum_{i=1}^n \mathbb{E}\big\|\nabla f_i(\bar{\mathbf{x}}_t)-\nabla f_i(\mathbf{x}_{i,t})\big\|^2 \nonumber\\
		&\overset{(a)}{\le} \frac{L^2}{n}\sum_{i=1}^n \mathbb{E}\big\|\bar{\mathbf{x}}_t-\mathbf{x}_{i,t}\big\|^2
		\overset{(a)}{=} \frac{L^2}{n}\,\mathbb{E}\big\|\mathbf{X}_t(\mathbf{I}-\mathbf{J})\big\|_F^2,
		\label{eq:T2-consensus}
	\end{align}
	where $\mathbf{J}=\frac{1}{n}\mathbf{1}\mathbf{1}^\top$, ($a$) is due to the $L$-smoothness assumption, and ($b$) is due to 
	$\sum_{i=1}^n\|\mathbf{x}_{i,t}-\bar{\mathbf{x}}_t\|^2=\|\mathbf{X}_t(\mathbf{I}-\mathbf{J})\|_F^2$.
	
    {{\textbf{Step 5}: Unrolling the consensus error $Q_{t,i}$.}
	Define the per-node disagreement energy
	\[
	Q_{t,i}\ \triangleq\ \mathbb{E}\big\|\bar{\mathbf{x}}_t-\mathbf{x}_{i,t}\big\|^2
	= \mathbb{E}\big\|\mathbf{X}_t\frac{\mathbf{1}}{n}-\mathbf{X}_t\mathbf{e}_i\big\|^2,
	\]
	with $\mathbf{e}_i$ the $i$-th canonical basis.
	Unrolling the update $\mathbf{X}_{s+1}=\mathbf{X}_s\mathbf{W}-\eta\,\mathbf{G}_s+\mathbf{V}_s$ yields
	\begin{equation}
		\mathbf{X}_t
		= \mathbf{X}_0\mathbf{W}^t
		- \eta\sum_{j=0}^{t-1}\mathbf{G}_j\,\mathbf{W}^{\,t-1-j}
		+ \sum_{j=0}^{t-1}\mathbf{V}_j\,\mathbf{W}^{\,t-1-j}.
		\label{eq:unroll}
	\end{equation}
	Hence,
	\begin{align}
		\bar{\mathbf{x}}_t-\mathbf{x}_{i,t}
		&= \mathbf{X}_0\mathbf{a}_{t,i}
		- \eta\sum_{j=0}^{t-1}\mathbf{G}_j\mathbf{a}_{t-1-j,i}
		+ \sum_{j=0}^{t-1}\mathbf{V}_j\mathbf{a}_{t-1-j,i}.
		\label{eq:barxi-diff}
	\end{align}
	If the initialization is in consensus (e.g., $\mathbf{X}_0(\mathbf{I}-\mathbf{J})=\mathbf{0}$, which includes $\mathbf{X}_0=\mathbf{0}$ as a special case), the first term on the RHS of \eqref{eq:barxi-diff} vanishes. By
	$\|a{+}b{+}c\|^2\le 3(\|a\|^2{+}\|b\|^2{+}\|c\|^2)$, we obtain
	\begin{align}
		&Q_{t,i}
		\le 3\eta^2\,{\mathbb{E}\Big\|\sum_{j=0}^{t-1}\big(\mathbf{G}_j-\partial f(\mathbf{X}_j)\big)\mathbf{a}_{t-1-j,i}\Big\|^2}
		\nonumber\\
		& + 3\eta^2\,{\mathbb{E}\Big\|\sum_{j=0}^{t-1}\partial f(\mathbf{X}_j)\mathbf{a}_{t-1-j,i}\Big\|^2}
	+ 3\,{\mathbb{E}\Big\|\sum_{j=0}^{t-1}\mathbf{V}_j\mathbf{a}_{t-1-j,i}\Big\|^2} \notag\\
		&=T_{3,i}+ T_{4,i}+ T_{5,i}.
		\label{eq:Qt-split}
	\end{align}
	Here, $\partial f(\mathbf{X}_j)\triangleq [\,\nabla f_1(\mathbf{x}_1^j)\ \cdots\ \nabla f_n(\mathbf{x}_n^j)\,]$.
	The three terms of \eqref{eq:Qt-split}, i.e., $T_{3,i}$ (stochastic noise), $T_{4,i}$ (bias due to disagreement), and $T_{5,i}$ (event-trigger perturbation), can be bounded using the spectral contraction
	$\|\mathbf{W}^k-\mathbf{J}\|_2^2\le \delta^k$, with $\delta=\|\mathbf{W}-\mathbf{J}\|_2^2\in[0,1)$.

	{{\textbf{Step 6}: Bounding $T_{3,i}$ (stochastic noise).}
	Recall
	\[
	T_{3,i}=\mathbb{E}\Big\|\sum_{j=0}^{t-1}\big(\mathbf{G}_j-\partial f(\mathbf{X}_j)\big)
	\mathbf{a}_{t-1-j,i}\Big\|^2.
	\]
	Let $\mathbf{a}_{t,i}\!\triangleq\!\frac{\mathbf{1}}{n}-\mathbf{W}^{k}\mathbf{e}_i$.
	Using $\|\mathbf{U}\mathbf{A}\|_F\!\le\!\|\mathbf{U}\|_F\|\mathbf{A}\|_2$ and
	$\|\mathbf{a}_{t,i}\|_2^2\!\le\!\delta^{k}$  we obtain
	\begin{align}
		T_{3,i}
		&\le \sum_{j=0}^{t-1}\mathbb{E}\big\|\mathbf{G}_j-\partial f(\mathbf{X}_j)\big\|_F^2\,\|\mathbf{a}_{t-1-j,i}\|_2^2\notag\\&
		\overset{(a)}{\le} \sum_{j=0}^{t-1} n\alpha^2\,\delta^{\,t-1-j}
		\le \frac{n\alpha^2}{1-\delta}.
		\label{eq:T3i}
	\end{align}
	where ($a$) is based on   Lemma \ref{lem:spectral-contraction}  proved in Appendix \ref{appb}.


	{{\textbf{Step 7}: Bounding $T_{4,i}$ (bias due to disagreement).}

\begin{align}
	&T_{4,i}
	= \mathbb{E}\bigg\|
	\sum_{j=0}^{t-1} \partial f(\mathbf{X}_j)\,\mathbf{a}_{t-1-j,i}
	\bigg\|_F^2
	\nonumber\\
	&\le
	\sum_{j=0}^{t-1} \mathbb{E}\big\|
	\partial f(\mathbf{X}_j)\,\mathbf{a}_{t-1-j,i}
	\big\|_F^2
	\nonumber\\
	&+
	\sum_{\substack{ j\neq j'}}^{t-1}
	\mathbb{E}\big\langle
	\partial f(\mathbf{X}_j)\,\mathbf{a}_{t-1-j,i},\,
	\partial f(\mathbf{X}_{j'})\,\mathbf{a}_{t-1-j',i}
	\big\rangle
	\nonumber\\
	&= \widetilde T_4 \;+\; \widetilde T_5.\label{T4_i}
\end{align}
Next, we proceed to bound $\widetilde T_4$ and $ \widetilde T_5$.
\begin{align}
	&\widetilde T_4
	=\sum_{j=0}^{t-1}\mathbb{E}\Big\|
	\partial f(\mathbf{X}_j)\,\mathbf{a}_{t-1-j,i}
	\Big\|^2 \nonumber\\
	&\le \sum_{j=0}^{t-1}\mathbb{E}\|\partial f(\mathbf{X}_j)\|_F^2\,
	\|\mathbf{a}_{t-1-j,i}\|_2^2 \nonumber
\end{align}
For $\mathbb{E}\|\partial f(\mathbf{X}_j)\|_F^2$, we have
\begin{align}
	&\mathbb{E}\|\partial f(\mathbf{X}_j)\|_F^2
	\le 3\mathbb{E}\Big\|
	\partial f(\mathbf{X}_j) - \partial f(\bar{\mathbf{x}}_j\mathbf{1}^\top)
	\Big\|_F^2
	\notag\\&+ 3\,\mathbb{E}\Big\|
	\partial f(\bar{\mathbf{x}}_j\mathbf{1}^\top) - \nabla f(\bar{\mathbf{x}}_j)\mathbf{1}^\top
	\Big\|_F^2 +\; 3\,\mathbb{E}\Big\|
	\nabla f(\bar{\mathbf{x}}_j)\mathbf{1}^\top
	\Big\|_F^2 \nonumber\\
	&\overset{(a)}{\le}\;
	3\,\mathbb{E}\Big\|
	\partial f(\mathbf{X}_j) - \partial f(\bar{\mathbf{x}}_j\mathbf{1}^\top)
	\Big\|_F^2
	+ 3n\beta^2
	+ 3\,\mathbb{E}\Big\|
	\nabla f(\bar{\mathbf{x}}_j)\mathbf{1}^\top
	\Big\|_F^2 \nonumber\\
	&\overset{(b)}{\le}\;
	3\sum_{i=1}^{n} L^2\,Q_{j,i}
	+ 3n\beta^2
	+ 3\,\mathbb{E}\Big\|
	\nabla f(\bar{\mathbf{x}}_j)\mathbf{1}^\top
	\Big\|_F^2 .
	\label{eq:bound-gradient-second-moment}
\end{align}
where ($a$) is due to Assumption \ref{ass:variance}, and ($b$) is due to the $L$-smoothness.  Hence,
\begin{align}
	&\widetilde T_4\le 3\sum_{j=0}^{t-1}\sum_{h=1}^{n}L^2\,\mathbb{E}Q_{j,h}\,
	\|\mathbf{a}_{t-1-j,i}\|_2^2 \nonumber\\
	&\quad+\;
	3\sum_{j=0}^{t-1}\mathbb{E}\Big\|
	\nabla f(\bar{\mathbf{x}}_j)\mathbf{1}^{\!\top}
	\Big\|_F^2
	\|\mathbf{a}_{t-1-j,i}\|_2^2 \nonumber\\
	&\quad+\;
	3n\beta^2\sum_{j=0}^{t-1}\|\mathbf{a}_{t-1-j,i}\|_2^2 .
	\label{eq:T4i-T4}
\end{align}
For $	\widetilde T_5$, we have
\begin{align}
	&\widetilde T_5
	=\sum_{j\neq j'}^{t-1}
	\mathbb{E}\Big\langle
	\partial f(\mathbf{X}_j)\,\mathbf{a}_{t-1-j,i},\,
	\partial f(\mathbf{X}_{j'})\,\mathbf{a}_{t-1-j',i}
	\Big\rangle \nonumber\\
	&\le \sum_{j\neq j'}^{t-1}
	\mathbb{E}\Big\|
	\partial f(\mathbf{X}_j)\,\mathbf{a}_{t-1-j,i}
	\Big\|\,
	\Big\|
	\partial f(\mathbf{X}_{j'})\,\mathbf{a}_{t-1-j',i}
	\Big\| \nonumber\\
	&\le \sum_{j\neq j'}^{t-1}
	\mathbb{E}\|\partial f(\mathbf{X}_j)\|\,
	\|\mathbf{a}_{t-1-j,i}\|_2\,
	\|\partial f(\mathbf{X}_{j'})\|\,
	\|\mathbf{a}_{t-1-j',i}\|_2 \nonumber\\
	&\le \sum_{j\neq j'}^{t-1}
	\frac{\mathbb{E}\|\partial f(\mathbf{X}_j)\|^2}{2}\,
	\|\mathbf{a}_{t-1-j,i}\|_2\,
	\|\mathbf{a}_{t-1-j',i}\|_2 \nonumber\\
	&\quad+\sum_{j\neq j'}^{t-1}
	\frac{\mathbb{E}\|\partial f(\mathbf{X}_{j'})\|^2}{2}\,
	\|\mathbf{a}_{t-1-j,i}\|_2\,
	\|\mathbf{a}_{t-1-j',i}\|_2 \nonumber\\
	&\overset{(a)}{\le} \sum_{j\neq j'}^{t-1}
	\mathbb{E}\Big(\frac{\|\partial f(\mathbf{X}_j)\|^2}{2}
	+\frac{\|\partial f(\mathbf{X}_{j'})\|^2}{2}\Big)\,
	\delta^{\,t-1-\frac{j+j'}{2}} \nonumber\\
	&= \sum_{j\neq j'}^{t-1}
	\mathbb{E}\|\partial f(\mathbf{X}_j)\|^2\,
	\delta^{\,t-1-\frac{j+j'}{2}} \nonumber\\
	&\overset{(b)}{\le} 3\sum_{j\neq j'}^{t-1}
	\Bigg(\sum_{i=1}^{n}L^2\,Q_{j,i}
	+\mathbb{E}\Big\|
	\nabla f(\bar{\mathbf{x}}_j)\mathbf{1}^\top
	\Big\|_F^2\Bigg)\,
	\delta^{\,t-1-\frac{j+j'}{2}} \nonumber\\
	&\quad+\;3n\beta^2\sum_{j\neq j'}^{t-1}
	\delta^{\,t-1-\frac{j+j'}{2}}\nonumber\\
	&= 	6\sum_{j=0}^{t-1}
	\Bigg(\sum_{h=1}^{n}\!L^2\,\mathbb{E}Q_{j,h}
	\!+\!\mathbb{E}\Big\|
	\nabla f(\bar{\mathbf{x}}_j)\mathbf{1}^{\!\top}
	\Big\|_F^2\Bigg)\!\!\!\!
	\sum_{j'=j+1}^{t-1}\!\!\!\!\!\sqrt{\delta}^{2t-j-j'-2}\nonumber
	\\&\quad+6n\beta^2 \sum_{j'>j}^{\,t-1} \delta^{\,t-1-\frac{j+j'}{2}}\nonumber\\
	&\leq 	6\sum_{j=0}^{t-1}
	\Bigg(\sum_{h=1}^{n}L^2\,\mathbb{E}Q_{j,h}
	+\mathbb{E}\Big\|
	\nabla f(\bar{\mathbf{x}}_j)\mathbf{1}^{\!\top}
	\Big\|_F^2\Bigg)
	\frac{\sqrt{\delta}^{t-j-1}}{1-\sqrt{\delta}}\nonumber
	\\&\quad+\frac{6n\beta^2}{(1-\sqrt{\delta})^2},\label{T_5wide}
\end{align}
where ($a$) is from Lemma \ref{lem:spectral-contraction}, and ($b$) stems from \eqref{eq:bound-gradient-second-moment}.
Plugging \eqref{eq:T4i-T4} and \eqref{T_5wide} into \eqref{T4_i} and using Lemma \ref{lem:spectral-contraction}, we have
\begin{align}
	&T_{4,i}
	\le \!\Bigg(3\sum_{j=0}^{t-1}\sum_{h=1}^{n}\E[L^{2}Q_{j,h}]
	+ \!3\!\sum_{j=0}^{t-1}\E\!\big\|\nabla f(\bar{\mathbf{x}}^{j})\mathbf{1}^{\top}\big\|_F^{2}\!\Bigg)\delta^{t-1-j}
	\nonumber\\
	& + \frac{6	(\sqrt{\delta})^{\,t-1-j}}{1-\sqrt{\delta}}
	\sum_{j=0}^{t-1}\!\Bigg(\sum_{h=1}^{n}\E[L^{2}Q_{j,h}]
	+ \E\!\big\|\nabla f(\bar{\mathbf{x}}^{\,j})\mathbf{1}^{\top}\big\|_F^{2}\Bigg)
	\nonumber\\
	& + \frac{9n\beta^{2}}{(1-\sqrt{\delta})^{2}} .
	\label{eq:T4i-final-converted}
\end{align}

	{{\textbf{Step 8}: Bounding $T_{5,i}$ (event-trigger perturbation).}
	With $\mathbf{a}_{t,i}$ as above and the Cauchy–Schwarz  inequality, it follows that
	\begin{align}
		&T_{5,i}
		=\mathbb{E}\Big\|\sum_{j=0}^{t-1}\mathbf{V}_j\mathbf{a}_{t-1-j,i}\Big\|^2
		\\&= \sum_{j=0}^{t-1}\mathbb{E}\|\mathbf{V}_j\mathbf{a}_{t-1-j,i}\|^2
		\!+\! \sum_{j\neq j'}^{t-1}\!\mathbb{E}\big\langle \mathbf{V}_j\mathbf{a}_{t-1-j,i},\mathbf{V}_{j'}\mathbf{a}_{t-1-j',i}\big\rangle \nonumber\\
		&\overset{(a)}{\le} \sum_{j=0}^{t-1}\mathbb{E}\|\mathbf{V}_j\|_F^2\delta^{t-1-j}
		\!+ \!\!\sum_{j\neq j'}^{t-1}\!\frac{1}{2}\mathbb{E}\big(\|\mathbf{V}_j\|_F^2+\|\mathbf{V}_{j'}\|_F^2\big)\delta^{t-\frac{j+j'}{2}-1} \nonumber\\
		&\le \sum_{j=0}^{t-1}\mathbb{E}\|\mathbf{V}_j\|_F^2\delta^{t-1-j}
		+ 2\sum_{j=0}^{t-1}\mathbb{E}\|\mathbf{V}_j\|_F^2\sum_{j'=j+1}^{t-1}\delta^{t-\frac{j+j'}{2}-1} \nonumber\\
		&\le \sum_{j=0}^{t-1}\mathbb{E}\|\mathbf{V}_j\|_F^2\left(\delta^{t-1-j}
		+ \frac{2{(\sqrt{\delta})}^{t-1-j}}{1-\sqrt{\delta}}\right),
		\label{eq:T5i}
	\end{align}
	where ($a$) arises from Lemma 1.

	{{\textbf{Step 9}: Bounding $Q_{t,i}$.}
	Plugging the bounds of $T_{3,i}$, $T_{4,i}$, and $T_{5,i}$ into
	\eqref{eq:Qt-split} (recall $Q_{t,i}\le 3\eta^2 T_{3,i}+3\eta^2 T_{4,i}+3T_{5,i}$) yields
	\begin{align}
		Q_{t,i}&
		\le \frac{3\eta^2 n\,\alpha^2}{1-\delta}
		\;+\; \frac{27\eta^2 n\,\beta^2}{(1-\sqrt{\delta})^2}+\left( \delta^{\,t-1-j} + \frac{2\,{(\sqrt{\delta})}^{\,t-1-j}}{1-\sqrt{\delta}} \right) \nonumber\\
		&\quad \times 9\eta^2 \sum_{j=0}^{t-1}\!
		\Big( L^2 \sum_{h=1}^{n}\mathbb{E}Q_{j,h}
		+ \mathbb{E}\big\|\nabla f(\bar{\mathbf{x}}^j)\mathbf{1}^\top\big\|_F^2 \Big)
		\! \nonumber\\
		&\quad + 3 \sum_{j=0}^{t-1}\mathbb{E}\|\mathbf{V}_j\|_F^2
		\left( \delta^{\,t-1-j} + \frac{2\,{(\sqrt{\delta})}^{\,t-1-j}}{1-\sqrt{\delta}} \right).
		\label{eq:Qti-final-delta}
	\end{align}
	
	{{\textbf{Step 10}: Averaging over devices.}
	Define the averaged disagreement $M_t\triangleq \frac{1}{n}\sum_{i=1}^n Q_{t,i}$.
	Using $\sum_{h=1}^{n}\mathbb{E}Q_{j,h}=n\,\mathbb{E}M_j$ and averaging
	\eqref{eq:Qti-final-delta} over $i$ gives
	\begin{align}
			&\mathbb{E}M_t
	\le \frac{3\eta^2 n\,\alpha^2}{1-\delta}
		\;+\; \frac{27\eta^2 n\,\beta^2}{(1-\sqrt{\delta})^2} \nonumber\\
		&\quad + 9\eta^2 \sum_{j=0}^{t-1}
		\mathbb{E}\big\|\nabla f(\bar{\mathbf{x}}^j)\mathbf{1}^\top\big\|_F^2
		\left( \delta^{\,t-1-j} + \frac{2\,{(\sqrt{\delta})}^{\,t-1-j}}{1-\sqrt{\delta}} \right) \nonumber\\
		&\quad + 9 n \eta^2 L^2 \sum_{j=0}^{t-1}\mathbb{E}M_j
		\left( \delta^{\,t-1-j} + \frac{2\,{(\sqrt{\delta})}^{\,t-1-j}}{1-\sqrt{\delta}} \right) \nonumber\\
		&\quad + 3 \sum_{j=0}^{t-1}\mathbb{E}\|\mathbf{V}_j\|_F^2
		\left( \delta^{\,t-1-j} + \frac{2\,{(\sqrt{\delta})}^{\,t-1-j}}{1-\sqrt{\delta}} \right).
		\label{eq:Mt-recursion-delta}
	\end{align}

{{\textbf{Step 11}: Bounding $T_2$ by the average disagreement.}
Recall $T_2=\mathbb{E}\big\|\nabla f(\bar{\mathbf{x}}_t)-\frac{1}{n}\sum_{i=1}^n \nabla f_i(\mathbf{x}_{i,t})\big\|^2$.
By the $L$–smoothness, we have
\begin{equation}
	\mathbb{E}\,T_2 \;\le\; L^2\,\mathbb{E}M_t,
	\qquad
	M_t \triangleq \frac{1}{n}\sum_{i=1}^n \mathbb{E}\|\mathbf{x}_{i,t}-\bar{\mathbf{x}}_t\|^2.
	\label{eq:T2-by-Mt}
\end{equation}

{{\textbf{Step 12}: Iterative inequality and telescoping.}
From \eqref{eq:iterate-ineq},
\begin{IEEEeqnarray}{rCl}
	\mathbb{E}\,f(\bar{\mathbf{x}}_{t+1})
	&\le& \mathbb{E}\,f(\bar{\mathbf{x}}_{t})
	\;-\; \frac{\eta}{2}\,\mathbb{E}\!\left\|\nabla f(\bar{\mathbf{x}}_{t})\right\|^2
	\;+\; \eta L^2\,\mathbb{E}M_t
	\nonumber\\[0.2em]
	&&{}+\; \frac{1}{\eta}\,\mathbb{E}\!\left\|\bar{\mathbf{e}}_{\,t}\right\|^2
	\;+\; \frac{\eta\,\alpha^2}{n}. \label{eq:ieee-step13}
\end{IEEEeqnarray}
Summing \eqref{eq:ieee-step13} for $t=0,\dots,T-1$ gives
\begin{IEEEeqnarray}{rCl}
	\frac{\eta}{2}\sum_{t=0}^{T-1}\mathbb{E}\!\left\|\nabla f(\bar{\mathbf{x}}_{t})\right\|^2
	&\le& f(\bar{\mathbf{x}}^{0})-f^{\ast}
	\;+\; \eta L^2 {\sum_{t=0}^{T-1}\mathbb{E}M_t}
	\nonumber\\[0.2em]
	&&{}+\; \frac{1}{\eta}\sum_{t=0}^{T-1}\mathbb{E}\!\left\|\bar{\mathbf{e}}_{\,t}\right\|^2
	\;+\; \frac{\eta\,\alpha^2 T}{n}.
	\label{eq:ieee-step13-sum}
\end{IEEEeqnarray}

{{\textbf{Step 13}: Bounding $\sum_{t=0}^{T-1}\mathbb{E}M_t$.}
From the recursion \eqref{eq:Mt-recursion-delta},
using $\frac{1}{1-\delta}\le \frac{1}{(1-\sqrt{\delta})^{2}}$, $\sum_{k=0}^{\infty}\delta^{k}=\frac{1}{1-\delta}$ and
$\sum_{k=0}^{\infty}(\sqrt{\delta})^{k}=\frac{1}{1-\sqrt{\delta}}$, we have
\begin{align}
	&\sum_{t=0}^{T-1}\mathbb{E}M_t
	\le \left(\frac{3\eta^2 n\alpha^2}{1-\delta}
	+ \frac{27\eta^2 n\beta^2}{(1-\sqrt{\delta})^{2}}\right) T
	\nonumber\\[0.2em]
	&{}+\; \frac{27\eta^2}{(1-\sqrt{\delta})^{2}}
	\sum_{t=0}^{T-1}\!\mathbb{E}\!\left\|
	\nabla f(\bar{\mathbf{x}}_{t})\mathbf{1}^{\top}\right\|_F^{2}
	\;+\; \frac{27n\eta^2 L^2}{(1-\sqrt{\delta})^{2}}
	\sum_{t=0}^{T-1}\!\mathbb{E}M_t
	\nonumber\\[0.2em]
	&{}+\; \frac{9}{(1-\sqrt{\delta})^{2}}
	\sum_{t=0}^{T-1}\!\mathbb{E}\!\left\|\mathbf{V}_{t}\right\|_F^{2}.
	\label{eq:ieee-SM-final}
\end{align}

\noindent{Equivalently,}
\begin{align}
	&\Big(1 - \frac{27n\eta^2 L^2}{(1-\sqrt{\delta})^{2}}\Big)
	\sum_{t=0}^{T-1}\!\mathbb{E}M_t
	\le \left(\frac{3\eta^2 n\alpha^2}{1-\delta}
	+ \frac{27\eta^2 n\beta^2}{(1-\sqrt{\delta})^{2}}\right) T
	\nonumber\\[0.2em]
	&{}+\! \frac{27\eta^2}{(1-\sqrt{\delta})^{2}}
	\sum_{t=0}^{T-1}\!\mathbb{E}\!\left\|
	\nabla f(\bar{\mathbf{x}}_{t})\mathbf{1}^{\top}\right\|_F^{2}
	\!+\! \frac{9}{(1-\sqrt{\delta})^{2}}
	\sum_{t=0}^{T-1}\!\mathbb{E}\!\left\|\mathbf{V}_{t}\right\|_F^{2}.
	\label{eq:ieee-SM-isolate}
\end{align}

{{\textbf{Step 14}: Bounding $\sum_{t=0}^{T-1}\mathbb{E}M_t$.}
From \eqref{eq:ieee-SM-isolate},
define
\[
\Gamma \;\triangleq\; 1-\frac{27\,n\,\eta^2 L^2}{(1-\sqrt{\delta})^{2}}\;>\;0 .
\]
Using $\|\nabla f(\bar{\mathbf{x}}_{t})\mathbf{1}^\top\|_F^2
= n\,\|\nabla f(\bar{\mathbf{x}}_{t})\|^2$, \eqref{eq:ieee-SM-isolate} becomes
\begin{align}
	&\sum_{t=0}^{T-1}\mathbb{E}M_t
	\le \frac{1}{\Gamma}\Bigg[
	\left(\frac{3\eta^2 n\alpha^2}{1-\delta}
	+\frac{27\eta^2 n\beta^2}{(1-\sqrt{\delta})^{2}}\right) T
	\nonumber\\[0.2em]
	&
	+\ \frac{27 n \eta^2}{(1-\sqrt{\delta})^{2}}
	\sum_{t=0}^{T-1}\mathbb{E}\|\nabla f(\bar{\mathbf{x}}_{t})\|^{2}
	\!+\! \frac{9}{(1-\sqrt{\delta})^{2}}
	\sum_{t=0}^{T-1}\mathbb{E}\|\mathbf{V}_{t}\|_F^{2}
	\Bigg].
	\label{eq:ieee-SM-solved-compact}
\end{align}

{{\textbf{Step 15}: Final error bound.}
Plug \eqref{eq:ieee-SM-solved-compact} into \eqref{eq:ieee-step13-sum}:
\begin{align}
	&\frac{\eta}{2}\sum_{t=0}^{T-1}\mathbb{E}\|\nabla f(\bar{\mathbf{x}}_{t})\|^{2}
	\notag\\&\le f(\bar{\mathbf{x}}^{0})-f^{\ast}
	+ \frac{\eta L^{2}}{\Gamma}\!\left(\frac{3\eta^{2}n\alpha^{2}}{1-\delta}
	+ \frac{27\eta^{2}n\beta^{2}}{(1-\sqrt{\delta})^{2}}\right)T \nonumber\\
	& + \frac{27 n \eta^{3} L^{2}}{(1-\sqrt{\delta})^{2}\Gamma}
	\sum_{t=0}^{T-1}\mathbb{E}\|\nabla f(\bar{\mathbf{x}}_{t})\|^{2}
	\!+\! \frac{9 \eta L^{2}}{(1-\sqrt{\delta})^{2}\Gamma}
	\sum_{t=0}^{T-1}\mathbb{E}\|\mathbf{V}_{t}\|_{F}^{2}  \nonumber\\
	& + \frac{1}{\eta}\sum_{t=0}^{T-1}\mathbb{E}\|\bar{\mathbf{e}}_{t}\|^{2}
	+ \frac{\eta\alpha^{2}T}{n}.
	\label{eq:ieee-final-pre}
\end{align}

Move the gradient–sum  $\frac{27 n \eta^{3} L^{2}}{(1-\sqrt{\delta})^{2}\Gamma}
\sum_{t=0}^{T-1}\mathbb{E}\|\nabla f(\bar{\mathbf{x}}_{t})\|^{2}$ on the right  to the left of \eqref{eq:ieee-final-pre}; then we obtain
\begin{align}
	&\eta\!\left(\frac{1}{2}-\frac{27 n \eta^2 L^2}{(1-\sqrt{\delta})^{2}\,\Gamma}\right)
	\sum_{t=0}^{T-1}\mathbb{E}\|\nabla f(\bar{\mathbf{x}}_{t})\|^2
	\le f(\bar{\mathbf{x}}^{0})-f^{\ast}
	\;
	\nonumber\\
	&+ \frac{\eta L^2}{\Gamma}
	\left(\frac{3\eta^2 n\alpha^2}{1-\delta}
	+\frac{27\eta^2 n\beta^2}{(1-\sqrt{\delta})^{2}}\right) T+\; \frac{\eta\,\alpha^2 T}{n}\notag\\
	&{}+\; \frac{\eta L^2}{(1-\sqrt{\delta})^{2}\Gamma}
	\,9 \sum_{t=0}^{T-1}\mathbb{E}\|\mathbf{V}_{t}\|_F^{2}
	\;+\; \frac{1}{\eta}\sum_{t=0}^{T-1}\mathbb{E}\|\bar{\mathbf{e}}_{\,t}\|^2
	\;.
	\label{eq:ieee-final-correct}
\end{align}
Notice that 
\[
\|\mathbf V_t\|_F^2 = \sum_{i=1}^n \|\mathbf v_{i,t}\|^2 
\;\le\; \sum_{i=1}^n \tau_t^2 
= n\tau_t^2,
\]
and
\[
\|\bar{\mathbf e}_t\|
= \Big\|\frac{1}{n}\sum_{i=1}^n \mathbf v_{i,t}\Big\|
\;\le\;\frac{1}{n}\sum_{i=1}^n \|\mathbf v_{i,t}\|
\;\le\;\frac{1}{n}\sum_{i=1}^n \tau_t
= \tau_t,
\]
which yields $\|\bar{\mathbf e}_t\|^2 \le \tau_t^2$. 
Plugging these bounds into~(49), we obtain the final inequality:
\begin{align}
	&\eta\!\left(\frac{1}{2}-\frac{27n\eta^2L^2}{(1-\sqrt{\delta})^2\Gamma}\right)
	\sum_{t=0}^{T-1}\E\|\nabla f(\bar{\mathbf x}_t)\|^2
\le f(\bar{\mathbf x}^0)-f^\ast	\nonumber\\
&\quad
	+ \frac{\eta L^2}{\Gamma}\!\left(\frac{3n^2\alpha^2}{1-\delta}
	+ \frac{27\eta^2 n\beta^2}{(1-\sqrt{\delta})^{2}}\right)T
	+ \frac{\eta\alpha^2 T}{n}
	\nonumber\\
	&\quad
	+ \frac{9\eta L^2}{(1-\sqrt{\delta})^{2}\Gamma}
	\sum_{t=0}^{T-1} n\tau_t^2
	+ \frac{1}{\eta}\sum_{t=0}^{T-1}\tau_t^2.
	\label{eq:final-bound}
\end{align}

\section{}\label{appc}
\begin{lemma}\label{lem:spectral-contraction}
	Under Assumption~1, 
	 for any $i\in\{1,\ldots,n\}$ and any $t\in\mathbb{N}$,
	\[
	\big\|\mathbf{a}_{t,i}\big\|_2^2
	=\Big\|\frac{1}{n}\mathbf{1}-\mathbf{W}^t\mathbf{e}_i\Big\|_2^2
	\;\le\; \delta^{\,t}.
	\]
\end{lemma}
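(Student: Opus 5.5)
The plan is to rewrite $\mathbf{a}_{t,i}$ as the image of $\mathbf{e}_i$ under a power of the single matrix $\mathbf{W}-\mathbf{J}$, and then apply submultiplicativity of the spectral norm. Since $\mathbf{J}\mathbf{e}_i=\frac{1}{n}\mathbf{1}\mathbf{1}^\top\mathbf{e}_i=\frac{1}{n}\mathbf{1}$, we have
\[
\mathbf{a}_{t,i}=\frac{1}{n}\mathbf{1}-\mathbf{W}^t\mathbf{e}_i=-(\mathbf{W}^t-\mathbf{J})\mathbf{e}_i .
\]
Because the sign does not affect the norm, it suffices to bound $\|(\mathbf{W}^t-\mathbf{J})\mathbf{e}_i\|_2$.

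The key algebraic step is the identity $\mathbf{W}^t-\mathbf{J}=(\mathbf{W}-\mathbf{J})^t$ for $t\ge 1$. From Assumption~\ref{ass:mixing}, $\mathbf{W}\mathbf{1}=\mathbf{1}$ and $\mathbf{1}^\top\mathbf{W}=\mathbf{1}^\top$, which give $\mathbf{W}\mathbf{J}=\mathbf{J}\mathbf{W}=\mathbf{J}$. Also $\mathbf{J}^2=\mathbf{J}$. I will prove the identity by induction on $t$. The base case $t=1$ is trivial. For the inductive step,
\[
(\mathbf{W}-\mathbf{J})^{t+1}=(\mathbf{W}^t-\mathbf{J})(\mathbf{W}-\mathbf{J})
=\mathbf{W}^{t+1}-\mathbf{W}^t\mathbf{J}-\mathbf{J}\mathbf{W}+\mathbf{J}^2
=\mathbf{W}^{t+1}-\mathbf{J}-\mathbf{J}+\mathbf{J},
\]
using $\mathbf{W}^t\mathbf{J}=\mathbf{J}$, which follows by iterating $\mathbf{W}\mathbf{J}=\mathbf{J}$.

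With the identity in hand, for $t\ge1$ submultiplicativity and $\|\mathbf{e}_i\|_2=1$ give
\[
\|\mathbf{a}_{t,i}\|_2\le\|\mathbf{W}-\mathbf{J}\|_2^{\,t}\,\|\mathbf{e}_i\|_2=(\sqrt{\delta})^{t}.
\]
Squaring yields $\|\mathbf{a}_{t,i}\|_2^2\le\delta^t$.

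The only point needing separate care is $t=0$, where $(\mathbf{W}-\mathbf{J})^0=\mathbf{I}\neq\mathbf{I}-\mathbf{J}$. Here I will compute directly: $\|\frac1n\mathbf{1}-\mathbf{e}_i\|_2^2=1-\frac1n\le 1=\delta^0$. Equivalently, $\mathbf{I}-\mathbf{J}$ is an orthogonal projection and so has spectral norm at most $1$.

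I expect no real obstacle. The main subtlety is keeping the $t=0$ case separate and checking that $\mathbf{J}$ commutes with $\mathbf{W}$, which needs double stochasticity; symmetry of $\mathbf{W}$ is not needed for this argument.
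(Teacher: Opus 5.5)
Your proof is correct, but it takes a different route from the paper's.

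The paper diagonalizes the symmetric matrix $\mathbf{W}$ in an orthonormal eigenbasis whose first vector is $\mathbf{1}/\sqrt{n}$. It then writes $\mathbf{W}^t-\mathbf{J}=\mathbf{U}\,\mathrm{diag}(0,\lambda_2^{t},\ldots,\lambda_n^{t})\,\mathbf{U}^\top$, reads off the exact identity $\|\mathbf{W}^t-\mathbf{J}\|_2=\max_{\ell\ge 2}|\lambda_\ell|^t=(\sqrt{\delta})^t$, and finishes with $\|\mathbf{e}_i\|_2=1$.

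You instead prove the algebraic identity $\mathbf{W}^t-\mathbf{J}=(\mathbf{W}-\mathbf{J})^t$ for $t\ge 1$ from $\mathbf{W}\mathbf{J}=\mathbf{J}\mathbf{W}=\mathbf{J}=\mathbf{J}^2$, and then apply submultiplicativity. This gives only the inequality $\|\mathbf{W}^t-\mathbf{J}\|_2\le(\sqrt{\delta})^t$, but that is all the lemma requires.

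What your route buys:
\begin{itemize}
\item It is more elementary: no spectral theorem is needed.
\item It is more general: it uses only double stochasticity, so the lemma would still hold if the symmetry in Assumption~\ref{ass:mixing} were dropped. The paper's equality does rely on symmetry (normality), since for a non-normal matrix the norm of a power can be strictly smaller than the power of the norm.
\end{itemize}

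What the paper's route buys: it treats $t=0$ uniformly. The $\mathbf{1}$-component is annihilated for every $t$, including $t=0$, which gives $\|\mathbf{I}-\mathbf{J}\|_2\le 1$. Your identity fails at $t=0$, and you correctly handle that case separately via $\|\frac{1}{n}\mathbf{1}-\mathbf{e}_i\|_2^2=1-\frac{1}{n}\le 1$.
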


\begin{proof}
	Because $\mathbf{W}$ is real symmetric and doubly stochastic, it is diagonalizable by an orthonormal basis:
	\[
	\mathbf{W}=\mathbf{U}\,\mathrm{diag}(1,\lambda_2,\ldots,\lambda_n)\,\mathbf{U}^\top,
	\quad \text{with}\quad |\lambda_\ell|<1\ \ (\ell\ge2).
	\]
	Moreover, $\mathbf{J}=\frac{1}{n}\mathbf{1}\mathbf{1}^\top$ is the orthogonal projector onto $\mathrm{span}\{\mathbf{1}\}$, hence
	\[
	\mathbf{J}=\mathbf{U}\,\mathrm{diag}(1,0,\ldots,0)\,\mathbf{U}^\top,
	\mathbf{W}-\mathbf{J}
	=\mathbf{U}\,\mathrm{diag}(0,\lambda_2,\ldots,\lambda_n)\,\mathbf{U}^\top .
	\]
	Therefore,
	\[
	\mathbf{W}^t-\mathbf{J}
	=\mathbf{U}\,\mathrm{diag}(0,\lambda_2^{\,t},\ldots,\lambda_n^{\,t})\,\mathbf{U}^\top,
	\|\mathbf{W}^t-\mathbf{J}\|_2=\max_{\ell\ge2}|\lambda_\ell|^{\,t}.
	\]
	Since $\|\mathbf{W}-\mathbf{J}\|_2=\max_{\ell\ge2}|\lambda_\ell|$ and $\delta=\|\mathbf{W}-\mathbf{J}\|_2^2$, we obtain
	\[
	\|\mathbf{W}^t-\mathbf{J}\|_2 \;=\; \|\mathbf{W}-\mathbf{J}\|_2^{\,t} \;=\; (\sqrt{\delta})^{\,t}.
	\]
	Using $\mathbf{a}_{t,i}=(\mathbf{J}-\mathbf{W}^t)\mathbf{e}_i$ and $\|\mathbf{e}_i\|_2=1$,
\begin{align}
	\|\mathbf{a}_{t,i}\|_2
	=\|(\mathbf{J}-\mathbf{W}^t)\mathbf{e}_i\|_2
	\le \|\mathbf{J}-\mathbf{W}^t\|_2\,\|\mathbf{e}_i\|_2
	= (\sqrt{\delta})^{\,t}.\label{delta_sqrt}
    \end{align}
	Squaring both sides of \eqref{delta_sqrt} yields $\|\mathbf{a}_{t,i}\|_2^2\le \delta^{\,t}$, completing the proof.
\end{proof}

\end{document}